\documentclass[a4paper,USenglish]{lipics-v2021}

\usepackage[T1]{fontenc}
\usepackage{graphicx} 
\usepackage{todonotes}
\usepackage{xspace}
\usepackage{graphicx}
\usepackage[all]{xy}
\usepackage{ebproof}
\usepackage{algorithm}
\usepackage{algpseudocode}

\newcommand{\prob}{\mathsf{Pr}}

\newcommand{\prop}{\mathsf{Prop}}
\newcommand{\SFP}{\mathsf{P}}

\newcommand{\CM}{\mathcal{M}}
\newcommand{\CL}{\mathcal{L}}

\newcommand{\CI}{\mathcal{I}}

\newcommand{\CV}{\mathcal{V}}

\newcommand{\CT}{\mathcal{T}}
\newcommand{\CO}{\mathcal{O}}
\newcommand{\CA}{\mathcal{A}}
\newcommand{\adv}{\mathbb{A}}
\newcommand{\rec}{\mathsf{rec}}

\nolinenumbers

\usepackage{tikz}
\usetikzlibrary{arrows,calc,patterns,positioning,shapes}
\usetikzlibrary{decorations.pathmorphing,calc,arrows.meta, positioning}
\tikzset{
modal/.style={>=stealth',shorten >=1pt,shorten <=1pt,auto,
node distance=1.5cm,semithick},
world/.style={circle,draw,minimum size=1cm,fill=gray!15},
point/.style={circle,draw,fill=black,inner sep=0.5mm},
reflexive/.style={->,in=120,out=60,loop,looseness=#1},
reflexive/.default={5},
reflexive point/.style={->,in=135,out=45,loop,looseness=#1},
reflexive point/.default={25},
}

\tikzset{
reflexive above/.style={->,loop,in=120,out=60,looseness=#1},
reflexive above/.default={7},
reflexive below/.style={->,loop,in=240,out=300,looseness=#1},
reflexive below/.default={7},
reflexive left/.style={->,loop,in=150,out=210,looseness=#1},
reflexive left/.default={7},
reflexive right/.style={->,loop,in=30,out=330,looseness=#1},
reflexive right/.default={7}
}

\title{An Epistemic Analysis of Random Coordinated Attack}

\author{Sophia Knight}{University of Minnesota Duluth, Duluth, Minnesota}{sophia.knight@gmail.com}{}{}

\author{David Lehnherr}{Independent researcher}{edavidlehnherr@gmail.com}{}{}

\author{Sergio Rajsbaum}{Instituto de Matem\'aticas, UNAM, Mexico City, Mexico and IRIF Universit\'e Paris Cit\'e, Paris, France}{sergio.rajsbaum@gmail.com}{}{}

\authorrunning{Knight, Lehnherr, Rajsbaum}

\ccsdesc[500]{Theory of computation~Modal and temporal logics}
\ccsdesc[500]{Computing methodologies~Distributed algorithms}
\ccsdesc[500]{Theory of computation~Probabilistic computation}

\keywords{Distributed computing, Random coordinated attack, Probabilistic epistemic logic, Task solvability }
\date{}

\Copyright{CC-BY 4.0}
\ArticleNo{0}

\begin{document}
\maketitle
\begin{abstract}
The coordinated attack problem models  the challenges  of coordinating a joint action that needs to be performed in a bounded amount of time, by communicating over  unreliable links.  It is the first distributed computing problem to be proven to be unsolvable.
Analysis of coordinated attack also revealed the importance of common knowledge, a central concept of epistemic logic. However, the randomized version of coordinated attack, which is solvable, has not, to the best of our knowledge, been studied through the lens of probabilistic epistemic logic, where processes can generate randomness by flipping coins.

In this work, we present a general epistemic logic framework to study  randomized algorithms with a bounded number of rounds. It can be used to study tasks such as coordinated attack, approximate agreement, and consensus. The framework can be used to study general dynamic graph models:   synchronous systems where reliable processes execute a bounded number of rounds, and messages can be lost, as determined by an adversary.

We combine techniques from the logical characterization of dynamic networks and the notion of task solvability in distributed computing with ideas from probabilistic dynamic epistemic logic. The framework is inspired by the operational model of Varghese and Lynch that introduced randomized coordinated attack. 
More broadly, the resulting notion of probabilistic epistemic task solvability provides a foundation for the epistemic study of randomized distributed computation.

Using this framework, we analyze the algorithm of Varghese and Lynch from a knowledge-theoretic perspective, providing a more formal treatment of both the algorithm and its corresponding lower bound. As a byproduct, we  improve the lower bound to make it tight. Our lower bound proof relies on indistinguishability arguments, demonstrating that reasoning about knowledge remains important in the probabilistic setting. In addition to the tight lower bound, we give a formal semantics of the crucial notion of `information level' introduced by Varghese and Lynch by showing that it corresponds to a specific epistemic formula.

\end{abstract}

\section{Introduction}
The coordinated attack problem~\cite{Akkoyunlu75} was formalized by Jim Gray in 1978~\cite{gray78}, to identify the challenges  of coordinating an action that must be performed in a \emph{bounded} amount of time, communicating over an unreliable channel.  Such coordination is needed often in databases, networking, and other distributed systems. Thus, it was historically significant as the first distributed problem proven to be unsolvable. Moreover, it initiated the study of distributed computability and underscored the role of common knowledge, a central concept in epistemic logic~\cite{FHMV1995}. 

The model used for this classic impossibility result comprises a set of reliable processes that communicate through message exchange in synchronous rounds. It is a very weak model,  allowing failure of all links, in every round. 
The coordinated attack problem is stated in terms of three requirements, for processes starting with initial values that are binary inputs, and outputting binary decision values after some \emph{bounded} number of rounds, $R$:
\begin{description}
\item[Validity:] \hfill
\begin{enumerate}
    \item If any process starts with an initial value of 0, then 0 is the only possible decision value for every process.
    \item If all processes start with an initial value of 1, and all messages are delivered, then 1 is
the only possible decision value for every process.
\end{enumerate}
\item[Agreement:] All processes choose the same decision value. 
\item[Termination:] Each process eventually outputs a decision value.
\end{description}
Since this problem is unsolvable by a deterministic algorithm~\cite{DBLP:journals/jacm/FischerLP85}, Varghese and Lynch~\cite{VargheseL96} studied \emph{randomized coordinated attack} with slightly different correctness conditions:
\begin{description}
\item[Validity:] \hfill
\begin{enumerate}
    \item If any process starts with an initial value of 0, then 0 is the only possible decision value for all processes.
    \item If all processes start with an initial value of 1 and all messages are delivered, then 1 is
the only possible decision value for all processes.
\end{enumerate}

\item[Randomized Agreement:] The probability that all processes do not output the same decision value is smaller than $\epsilon$.
\item[Termination:] All processes eventually output a decision value.
\end{description}
Under a deterministic, non-adaptive adversary that cannot read message bits, Varghese and Lynch present an $R$-round algorithm, and show that it is essentially optimal in terms of $\epsilon$.
The algorithm has $\epsilon=1/R$ and they show that the general lower bound is $\frac{1}{R+1}$.

The impossibility of coordinated attack is frequently taught in networking, database and distributed computing courses. However,  randomized coordinated attack  deserves more attention.
 The ideas for both the algorithm and the lower bound are  elegant and simple, described subsequently by Lynch~\cite[Chapter 5]{Lynch96} and Aspnes~\cite[Chapter 8]{aspnes2025notestheorydistributedsystems}.
However, little subsequent work has focused on either the algorithm or the lower bound.  Despite their simplicity, these results merit closer examination. 
To begin with,
the (small) gap between $\frac{1}{R}$ and $\frac{1}{R+1}$ is intriguing, and
there seems to be an underlying epistemic interpretation: from Lynch's book Section 5.2.2.  (P.~88):
\begin{quote}
    ``The algorithm is based on what processes know about each other's initial
values and on what they know about each other's knowledge of the initial values,
and so on.'' 
\end{quote}
We formalize this claim and elucidate its significance by characterizing both the knowledge gained by processes in the algorithm and the knowledge that remains unattainable in the lower-bound argument.

The benefit obtained is first of all, a new exposition of the results, which formalizes the intuitive arguments, and reveals their generality.
Once a precise epistemic understanding is reached, it becomes clear  how to close the gap between the lower and upper bounds. The core contributions of the present paper are:

\begin{enumerate}
\item A probabilistic epistemic framework for randomized bounded synchronous distributed computation.
\item An epistemic characterization of information levels.
\item A tight lower bound for randomized coordinated attack.
\item  A framework that is applicable to general models of communication.
\end{enumerate}

The developed framework combines approaches from distributed computing and logical methods for reasoning about uncertainty.  We demonstrate the generality of our framework by analyzing randomized coordinated attack in the iterated immediate snapshot model of communication ($\mathcal{IIS}$).

A central contribution of this paper is a probabilistic epistemic framework for reasoning about randomized distributed computation. The framework makes it possible to analyze randomized coordinated attack through the lens of knowledge and reveals a common structure underlying both the algorithm of Varghese and Lynch and its lower bound. In this view, information levels correspond to finite levels of iterated knowledge, while the lower bound follows from fundamental constraints on the acquisition of knowledge. This unification not only provides a conceptual explanation of the original results, but also closes the gap between the known upper and lower bounds.

The paper is organized as follows. Section~\ref{sec:model} defines the operational and logical models of interest, and provides preliminary notions for epistemic logic. In Section~\ref{sec:task solv}, the introduced machinery is used to state probabilistic task solvability for epistemic models formally. Section~\ref{sec:information levels intro} showcases applications of the developed framework, by proving the aforementioned one-to-one correspondence and tightening the lower bound for randomized coordinated attack. Finally, Section~\ref{sec:conclusion} concludes our paper and discusses possible future work.

\subsection*{Additional Related Work}

Over the last decades, several mathematical frameworks for reasoning about distributed task solvability have been developed. Approaches based on combinatorial topology~\cite{DBLP:conf/stoc/HerlihyS93}, point-set topology~\cite{DBLP:journals/ipl/AlpernS85}, and epistemic logic~\cite{DBLP:journals/jacm/FaginHV92} have all led to important insights into the limits of distributed computation. More recently, the seminal work of Goubault et al.~\cite{GoubaultLR21} bridges topological models and epistemic logic, by relating task solvability to dynamic epistemic logic~\cite{vanDitmarsch2007-VANDEL-6}. In their work, dynamic epistemic logic is used
to model how the knowledge of processes evolves through communication. Inspiration for this approach was earlier work by Casta{\~{n}}eda et al.~\cite{CastanedaDRV24}, which looks at communication patterns epistemically.

These works focus on deterministic algorithms for distributed tasks. In contrast, randomized distributed computation introduces an additional probabilistic dimension, both through randomized process behavior and through uncertainty induced by communication failures. In this work, we focus on randomized process behavior, whereas stochastic communication failures have recently been studied by Fraignaud et al.~\cite{DBLP:conf/wdag/FraigniaudPR25}.

We stress that we consider the case of bounded computation, which is different from the well developed research line on randomized algorithms for consensus with no agreement error, but requiring termination only with high probability  e.g.~\cite{AWbook} and~\cite[Chapter 24]{aspnes2025notestheorydistributedsystems}.

Knowledge based on probabilistic actions has been studied using probabilistic dynamic logic~\cite{ProbabDEL-Kooi-2003}, as well as in the famous \emph{interpreted systems model}   (see~\cite{FHMV1995}), by Halpern~\cite{DBLP:journals/jacm/FaginH94,Halpern2003-HALRAU}.

Building on the approach of Goubault et al.~\cite{GoubaultLR21}, communication patterns~\cite{CastanedaDRV24}, probabilistic epistemic semantics~\cite{ProbabDEL-Kooi-2003,DBLP:journals/jacm/FaginH94,Halpern2003-HALRAU}, and the operational model of Varghese and Lynch~\cite{VargheseL96}, we develop a novel probabilistic epistemic framework for randomized distributed computation. The framework allows us to analyze problems such as randomized coordinated attack through notions of indistinguishability, iterated knowledge, and probabilistic uncertainty.



\section{Operational and logical models}\label{sec:model}
We begin by introducing the model from Lynch~\cite[Chapter 5]{Lynch96}, which we call the \emph{operational model}, generalizing to arbitrary message delivery patterns, following recent research on dynamic networks, but extended with probabilities. Then  we introduce a corresponding \emph{epistemic model}, inspired by the frameworks~\cite{GoubaultLR21,CastanedaDRV24}, extending it with probabilities, following~\cite{ProbabDEL-Kooi-2003}. We also present an epistemic notion of probabilistic task solvability.
A main goal of this section is to show that the operational model and the epistemic model are in a sense equivalent.

\subsection{Operational model}\label{subsec:operational model}
For a positive integer $k\in \mathbb{N}$, we define $[k]:= \{1,\ldots, k\}$, and if $a = (a_i)_{i\in [k]}$ is a sequence then $a_i$ denotes its $i$-th element. Throughout this work, the set $N$ is a set of $n$ processes, denoted
$p_1,\ldots, p_n$, or simply $p,q,$ and $r$ if no ambiguity arises. We use $v_{p_i}$ for the component of a vector $v$ associated with process $p_i$. Before starting a protocol, processes are assigned a private input value from a domain $D$ by an \emph{input configuration} $I: N \rightarrow D$.
In this paper, we consider binary inputs, i.e., $D=\{0,1\}$. 
The set of all input configurations is denoted by $\CI$.\\

\noindent\textbf{\textbf{Communication and protocol.}}
A communication protocol consists of~$R$ synchronous rounds, where at the beginning of each round every process sends a message to all other processes. Messages sent in a given round are simultaneously received at the end of that round, or are lost otherwise.
Throughout this work, we only consider
\emph{full information protocols} with \emph{perfect recall}. That is,  processes send their local state in each round, and remember everything they have received.
We denote $p$'s message in round $r$ by $m_{p,r}$.

A \emph{communication graph} is a directed graph whose vertex set corresponds to the set of processes~$N$. An edge $(p,q)$ in a communication graph indicates that the message sent by process $p$ is delivered to process $q$. The message $m_{p,r}$ is always received by process $p$. Consequently, every vertex has a loop. Given a communication graph $G := (N,E)$, we define $\mathsf{in}_p(G):= \{q\in N \mid (q,p)\in E\}$ as the set of processes from which $p$ receives messages.
A set of communication graphs is called a \emph{communication pattern}.\\

\noindent\textbf{Adversary.}
The adversary $\adv$ chooses inputs and determines which messages are delivered in each round by selecting a communication graph for every round, from a fixed set of graphs. 
This corresponds to an \emph{oblivious} adversary, as the set of possible graphs in each round is the same, and the adversary can select \emph{any} graph from this set, e.g.~\cite{WinklerPGSS24}.
In the usual setting of the coordinated attack problem, the adversary can select any directed graph, including the one where no messages are delivered to other processes.\\

\noindent\textbf{Executions.}
An execution models the evolution of the local states of processes over $R$ rounds,
in a full information protocol. These do not include decision values, which we model later on.
We distinguish between \emph{deterministic} and \emph{probabilistic executions}. 

In a deterministic execution, processes start in an input configuration $I\in \CI$, and the adversary $\adv$ drops messages according to a sequence of communication graphs $G :=  (G_r)_{r\in [R]}$, where $G_r := (N,E_r)$, denoting the set of successful channels at each round. Deterministic executions can thus be parametrized by a tuple $\alpha = (I,G)$.
The set of messages received by process~$p$ at the end of round $r$ is
\[
\rec_p(G,r) := \{m_{q,l} \mid (q,p) \in E_l \text{ for some } l\leq r \}.
\]
Since processes have {perfect recall},
 the local state of a process $p$ in execution $\alpha$ at the end of round $r$ is
\[
s_p(\alpha, r) := (I(p),\rec_p(G,r)).
\]
Moreover, because we only consider
full information protocols, processes send their local state in each round, i.e., $m_{p,r} = s_p(\alpha, r-1)$.

In a probabilistic execution, processes generate randomness by tossing coins. Since we assume full information protocols, the only randomization in their behavior is to include the outcome of random coins in messages, and at the end, the decision of a process may depend on the random coin values that the process has seen in the past
(the number of rounds executed is fixed a priori, and cannot be affected by the randomness). 
We define the set of all probabilistic executions to be $\Omega$. 

We stress that we assume the random coin values are not observable by the adversary.
Thus, there is a probability distribution over executions, once the adversary fixes inputs and communication graphs. 

The local randomness of a process $p$ is a random vector $\bar{X}_p = (X_{p,1},\ldots,X_{p,R})$, where each $X_{p,r}$ is an independent random variable with finite support sampled by process~$p$ in round~$r$. Assuming finite support reflects the standard computational interpretation of randomized distributed algorithms, where processes generate randomness using finite random bit strings (e.g., coin flips).
We define $\bar{X}_{p\mid r} := (X_{p,1}, \ldots, X_{p,r})$, and write $\bar{x}_p$ and $\bar{x}_{p\mid r}$ for realizations of $\bar{X}_p$ and $\bar{X}_{p\mid r}$ respectively. We use $\mathbf{X}$ for the joint variable $(\bar{X}_{p_1}, \ldots , \bar{X}_{p_n})$. Intuitively, $\mathbf{X}$ captures the randomness underlying an execution.

We define a probability space
$\CO = (\mathbf{X},2^\mathbf{X},\mu)$, where $\mu: 2^\mathbf{X} \rightarrow [0,1]$ is a probability measure. 
A probabilistic execution with respect to $\CO$ is a triple $\alpha = (I,G,x)$, where $(I,G)$ is a deterministic execution and $x$ is an element of $\mathbf{X}$. If the adversary $\adv$ chooses $I$ and $G$, the probability that the execution is $(I,G,x)$ equals $\mu(\{x\})$. Moreover, conditioned on $I$ and $G$, probabilities sum up to 1, i.e.,
\[
\sum_{\forall x\in \mathbf{X.}(I,G,x)}\mathsf{Pr}[(I,G,x)] =\sum_{x \in \mathbf{X}} \mu(\{x\}) = 1. 
\]

For a probabilistic execution $\alpha = (I,G,x)$, the local state of process $p$ at the end of round~$r$ additionally contains the sampled randomness over rounds:
\[
s_p(\alpha, r) := (I(p),\bar{x}_{p\mid r}, \rec_p(G,r)).
\]

\noindent\textbf{Decisions.}
At the end of an execution, processes output a decision value. 
Let $\Omega$ be the set of all executions. For each $p\in N$, we define
\[
S_p := \{ s_p(\alpha, R) \mid p\in N \text{ and } \alpha \in \Omega\},
\]
as the set of all possible local states at the end of executions in $\Omega$. We model decisions of processes as a function $\delta_p : S_p \rightarrow \{0,1\}$ and define $\delta$ to represent the decision of all processes.
For fixed $I$ and $G$, this induces a probability distribution on decisions for each process $p$. For example, conditioned on $I$ and $G$, the probability that process $p$ decides the value 1 is
\[
\mathsf{Pr}[\{\alpha \in \Omega \mid \alpha = (I,G, \cdot) \text{ and }\delta_p(s_p(\alpha, R))=1\}].
\]
To simplify notation, we write 
$\mathsf{Pr}_{I,G}[p \text{ decides } 1]$ for the above expression. Lemma~\ref{lem:indist_preserves_prob} is a standard result and states that the probability of a process deciding 1 (or 0) is the same in executions it cannot distinguish (the proof can be found in~\cite{Lynch96}). 

\begin{lemma}\label{lem:indist_preserves_prob}
    Let $I,I' \in \CI$, $G,G'$ be sequences of communication patterns, $p\in N$, and $x,x'\in \mathbf{X}$ with $\bar{x}_p = \bar{x}'_p$ such that $s_p((I,G,x),R) = s_p((I',G',x'),R)$.
    We find that
    \[
    \mathsf{Pr}_{I,G}[p \text{ decides } 1]=\mathsf{Pr}_{I',G'}[p \text{ decides } 1].
    \]
\end{lemma}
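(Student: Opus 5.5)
The plan is to prove something slightly stronger than the equality of probabilities. I will show that the \emph{event} ``$p$ decides $1$'' is literally the same subset of $\mathbf{X}$ in both executions. The probability space $\CO=(\mathbf{X},2^{\mathbf{X}},\mu)$ does not depend on the adversary's choice of $I,G$. So it suffices to prove that
\[
\forall y\in\mathbf{X}:\quad s_p((I,G,y),R)=s_p((I',G',y),R).
\]
Given this, $\{y\mid \delta_p(s_p((I,G,y),R))=1\}=\{y\mid \delta_p(s_p((I',G',y),R))=1\}$, and applying $\mu$ to both sides gives the claim.

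To get the displayed identity, I will first make explicit the structure of a local state in a full information protocol. By unfolding $m_{q,l}=s_q(\alpha,l-1)$, the state $s_p(\alpha,r)$ is a finite nested term with three kinds of data:
\begin{enumerate}
\item the \emph{shape} of the term, namely which processes' messages appear, at which rounds and at which nesting positions; this is determined by $G$ restricted to the causal past of $(p,r)$;
\item the input values $I(q)$ of the processes $q$ in that causal past;
\item the coin prefixes $\bar{y}_{q\mid l}$ of those processes.
\end{enumerate}
By induction on $r$ I will prove the following factorisation: for fixed $(I,G)$, the map $y\mapsto s_p((I,G,y),r)$ is obtained from a \emph{skeleton} depending only on $(I,G)$, namely the shape together with the inputs, by filling in the coin slots with the corresponding entries of $y$. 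The base case $r=0$ is $(I(p),\emptyset)$ with an empty coin prefix. The inductive step uses the definition of $\rec_p(G,r)$ together with $m_{q,l}=s_q(\alpha,l-1)$.

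Next I will show that the skeleton can be read off from any single state. If $s_p((I,G,x),R)=s_p((I',G',x'),R)$ for one pair $x,x'$, then the two terms are syntactically equal, so their shapes and input entries coincide. Hence the skeletons of $(I,G)$ and $(I',G')$ coincide, even though $I$ and $I'$, and $G$ and $G'$, may differ outside $p$'s causal past. Combined with the factorisation, this yields the displayed identity for every $y$. The identity also explains the role of the hypothesis $\bar{x}_p=\bar{x}'_p$: it is consistent with, and in fact implied by, the equality of $p$'s own coin slot.

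I expect the main obstacle to be this lifting step, from equality of states for one particular coin vector to equality for all coin vectors. Done carefully, it requires that coin values are stored as data in designated positions of the state, and never influence which messages are sent or delivered. This holds here because the adversary is oblivious and cannot read message bits, and because the protocol is full information with a fixed number of rounds. I would write the induction with an explicit recursive definition of the skeleton so that this independence is visible. The remaining steps are routine.
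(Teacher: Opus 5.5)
Your proof is correct and is essentially the standard argument that the paper defers to in Lynch's book. Fix the coin vector, observe that $p$'s final state (and hence its decision) is the same under both adversary choices, and conclude that the two decision events are the same subset of $\mathbf{X}$ and so have equal $\mu$-measure. Your skeleton-lifting step is what turns the paper's pointwise hypothesis (equal states for one pair $x,x'$) into Lynch's adversary-level indistinguishability. You are right that this step needs messages to carry sender and round labels, which the paper's definition of $\rec_p(G,r)$ leaves implicit: without those labels, coincidentally equal coin values could merge distinct messages and the lifting would fail.
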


Example~\ref{ex:lynch1} shows how our operational model depicts probabilities. The mentioned algorithm (here only outlined) will be explained in Section~\ref{sec:information levels}.

\begin{example}\label{ex:lynch1}
    Varghese and Lynch~\cite{VargheseL96} propose a simple algorithm to solve the randomized coordinated attack problem in $n$ rounds. Before sending its first message, a designated process $p$ rolls a fair $n$-sided die (the so-called \emph{key} $k$) and appends the value to its message. 
    
    Throughout the protocol, processes compute their \emph{information level} (formalized later), which is an integer and depends only on the adversarial choice of $I$ and $G$. It can be shown that processes disagree only if $k$ is equal to a specific level among processes. Consequently, since the adversary fixes $I$ and $G$, it holds that  
    \[
    \mathsf{Pr}_{I,G}[\text{ processes disagree } ] = \frac{1}{n}.
    \]
\end{example}

\subsection{Logical model}\label{subsec:logical model}

This section will introduce logical concepts to later reason about the operational model introduced in Section~\ref{subsec:operational model}. We first introduce the logical language used throughout the paper, and then build toward the introduction of \emph{probabilistic epistemic models}. \\

\noindent\textbf{Language.} The set of propositional variables is $\prop$.
For $p\in \prop$ and $i\in N$, the language~$\CL$  is inductively defined by the following grammar:
\[
\phi ::= p \mid  \lnot  \phi    \mid (\phi \land \phi) \mid K_p \phi.
\]
The formula $K_p\phi$ means that process $p$ \emph{knows} $\phi$. We also define the \emph{everybody knows} modality 
\[
E\phi := \bigwedge_{p\in N}K_p \phi.
\]
We set $E^0\phi := \phi$ and $E^n\phi :=  E(E^{n-1}\phi)$.
In addition to knowledge, we also define a \emph{probabilistic modality} $\SFP_{\leq s}\phi$  that denotes that the probability of $\phi$ being true is a most $s$.
We stress that $\SFP_{\leq s}$ is a global operator, meaning that probabilities are not assigned by processes, but by the adversary (or the environment) instead. This is in line with the operational probability space $\CO$, which induces probabilities over executions from the global perspective of the adversary. Thus, $\SFP_{\leq s}\phi$ means that the adversary believes that $\phi$ is true with at most probability $s$. For convenience, we abbreviate $\SFP_{> s}\phi,\SFP_{< s} \phi, \SFP_{\geq s}$, and $\SFP_{=s}\phi$, which can all be defined in terms of $\SFP_{\leq s} \phi$.
The extension of $\CL$ with the modality $\SFP_{\leq s}$ is denoted by~$\CL_\SFP$. \\

\noindent\textbf{Kripke frames.} To model what processes and the adversary can distinguish during an execution, we use Kripke frames. Formally, let $W$ be a set of worlds (global states) and $A$ be a set of agents (e.g., processes or the adversary). A Kripke frame is a pair $(W,(\sim_a)_{a\in A})$ where each~$\sim_a$ is an equivalence relation on $W$, also called an \emph{indistinguishability} relation. We write $[w]_{a}$ for the equivalence class of $w$ under $\sim_a$. For a set of worlds $W$, we differentiate between two frames:
\begin{enumerate}
    \item the \emph{process-frame} $F_N = (W,(\sim_p)_{p\in N})$; and 
    \item the \emph{adversarial frame} $F_\adv = (W,\sim_\adv)$.
\end{enumerate}
The frames are interpreted as follows; if $w\sim_p v$, then process $p$ cannot distinguish between the worlds $w$ and $v$; if $w \sim_\adv v$, then the adversary $\adv$ cannot tell the two worlds apart. A \emph{Kripke model} is a triple $\CM = (F_N, F_\adv, V)$, where $F_N$ and $F_\adv$ contain the same set of worlds and $V: W \rightarrow 2^\prop$ is a valuation. Adding a valuation $V$ to a frame enables us to reason about what agents actually know.\\

\noindent\textbf{Probabilistic models.}
Consider an arbitrary Kripke model $\CM = (F_N,F_\adv, V)$.
We equip each world of the adversarial Kripke Frame $F_\adv$ with a probability space 
\[
\CO_w := ([w]_\adv, 2^{[w]_\adv}, \mu_w) \text{ such that } \CO_w = \CO_v \text{, if } w \sim_\adv v.
\]
We call the frame $\mathbb{F}_\adv = (W,\sim_\adv, (\CO_w)_{w\in W})$ a \emph{probabilistic adversarial frame}. Furthermore, notice that this choice of modeling implies that
\[
\sum_{v\in [w]_\adv}\mu_w(v)= 1.
\]

An \emph{$\adv$-probabilistic model} is a triple $\CM = (F_N,\mathbb{F}_\adv,V)$, where $V:W\rightarrow 2^\prop$ is a valuation. Definition~\ref{def:relational_truth} illustrates the notion of truth in $\adv$-probabilistic models. Notice that probabilities are always evaluated relative to an adversarial equivalence class. Hence, probabilistic formulas express probabilities relative to the adversary's uncertainty.

\begin{definition}[$\models$]\label{def:relational_truth}
Let $\CM := (F_N, \mathbb{F}_\adv,V)$ be an $\adv$-probabilistic model. For every world $w\in W$
we define the relation $\CM,w \models \phi$ by induction on $\phi$:
\begin{align*}
  & \CM,w \models  p        &\text{if{f}}\qquad & p \in V(w)\\
    &\CM,w \models  \neg \phi                     &\text{if{f}}\qquad  & \CM,w  \not \models \phi \\
  & \CM,w \models \phi \land \psi   &\text{if{f}}\qquad  &\CM,w \models \phi \text{ and } \CM, w\models \psi\\
      &\CM,w\models  K_p \phi \qquad &\text{if{f}} \qquad & \forall v \in [w]_p. \CM,v \models \phi.\\
      &\CM,w \models  \SFP_{\leq s} \phi \qquad &\text{if{f}} \qquad & \mu_w(\{v \in [w]_\adv \mid  \CM, v \models \phi\}) \leq s.
\end{align*}
\end{definition}

Let $\CM = (F_N,\mathbb{F}_\adv,V)$ be an $\adv$-probabilistic model. 
A formula $\phi \in \CL_\SFP$ is \emph{valid in} $\CM$, denoted $\CM \models \phi$, if it is true in all worlds. 

\subsection{Translation}\label{sec:translation}
This section introduces the necessary machinery to picture the operational model logically. Besides standard communication updates we also define updates that model the generation of local randomness. Readers familiar with the literature on dynamic epistemic logic will observe that we do not define the updates based on \emph{action models}. This is purposely done as to not overload the presentation with additional notation. Details on communication updates can be found in~\cite{CastanedaDRV24}. Action models for the generation of randomness are straightforward (as we will demonstrate), because computing probabilities is not part of the update itself.\\

\noindent\textbf{Communication update.} Let $F_N = (W,\sim_p)_{p\in N}$ be a process frame and consider a communication pattern $G = \{G_1, \ldots, G_R \}$. We define the \emph{communication update}:
\[
F_N \otimes G := (W^\otimes, \sim^\otimes_p)_{p\in N},
\]
where $W^\otimes := \{(w,G_i) \mid w\in W \text{ and } G_i \in G\}$,
where $(w,G_i) \sim^\otimes_p (w',G_j)$ if and only if $w\sim_p w'$ and $\textsf{in}_p(G_i) = \textsf{in}_p(G_j)$. In other words, a process $p$ cannot distinguish two worlds $(w,G_i)$ and $(w,G_j)$ if and only if it cannot distinguish $w$ from $w'$ and receives messages from the same set of processes in $G_i$ and $G_j$.

\begin{example}\label{ex:msg_upd}
In the \emph{immediate snapshot model} $\mathcal{IS}$ processes communicate by using snapshots and at least one message gets received. Figure~\ref{fig:ex:msg_upd} depicts the update for two processes $p$ and~$q$, and one global state $w$. In the frame on the right $\top\bot$ means that $p$ received $q$'s message but $p$'s message to $q$ was lost. This scenario corresponds to the communication graph $G_{p} :=\{(q,p), (p,p), (q,q)\}$. Moreover, $p$ cannot distinguish $\top\bot$ from $\top\top$, which corresponds to the communication graph $G_{pq} := \{(p,q),(q,p),(p,p), (q,q)\}$. Formally, $p$ cannot distinguish the worlds induced by $G_p$ and $G_{pq}$ because $\mathsf{in}_p(G_p) =\mathsf{in}_p(G_{pq}) = \{p,q\}$.
\begin{figure}[h]\label{fig:ex:msg}
    \centering
    \begin{tikzpicture}[node distance=7.5mm,  world/.style={
        circle,
        draw,
        minimum size=5mm,
        fill=gray!20,
        inner sep=1pt,
        font=\small
    }]

\node[world] (w) {$w$};
\node[world] (a) [right=3cm of w] {$\top\bot$};
\node[world] (b) [right=of a] {$\top\top$};
\node[world] (d) [right=of b] {$\bot\top$};

\path[-] (a) edge node[above] {$p$} (b);
\path[-] (b) edge node[above] {$q$} (d);

\draw[->, line width=1.5pt, >=Latex, shorten >=6mm, shorten <=6mm]
    (w)
    --
    (a)
    node[midway, above] {update};

\end{tikzpicture}
    \caption{A simple message update for $\mathcal{IS}$.}
    \label{fig:ex:msg_upd}
\end{figure}

\end{example}

\noindent\textbf{Configuration update.} A process frame $F_N = (W,\sim_p)_{p\in N}$ can be updated with a set $\CV$ of vectors each containing $n$ values. We define the \emph{configuration update} of $F_N$ with $\CV$ as:
\[
F_N \circ \CV := (W^\circ, \sim^\circ_p)_{p\in N},
\]
where $W^\circ := \{(w,v) \mid w\in W \text{ and } v\in \CV\}$ and $(w,v)\sim^\circ_{p_i} (w',v')$ if and only if $w\sim_{p_i} w'$ and $v_{p_i} = v'_{p_i}$. Example~\ref{ex:config_update} shows an update involving two processes, each privately tossing a coin. 

\begin{example}\label{ex:config_update} Consider two processes $p$ and $q$ and an initial global state $w$. Each process privately tosses a coin. Without any communication between the processes, this results in four possible global states. Since each process observes the outcome of its own coin toss, it only considers global states possible in which its local outcome is the same.

For example, in Figure~\ref{fig:ex:cointoss} in the global state $TT$, both processes observe tails. Process $p$ knows that its own outcome is tails, but does not know the outcome observed by $q$. Hence, from the perspective of $p$, both $TT$ and $TH$ are possible. 

\begin{figure}
    \centering
    \begin{tikzpicture}[node distance=7.5mm,  world/.style={
        circle,
        draw,
        minimum size=5mm,
        fill=gray!20,
        inner sep=1pt,
        font=\small
    }]

\node[world] (a) {$TT$};
\node[world] (b) [right=of a] {$HT$};
\node[world] (d) [below=of a] {$TH$};
\node[world] (e) [below=of b] {$HH$};

\node[world] (i1) at ($(a)!0.5!(d)+(-4cm,0)$) {$w$};

\path[-] (a) edge node[above] {$q$} (b);
\path[-] (b) edge node[right] {$p$} (e);
\path[-] (e) edge node[below] {$q$} (d);
\path[-] (a) edge node[left] {$p$} (d);

\draw[->, line width=1.5pt, >=Latex]
    ($(a)!0.5!(d)+(-3cm,0)$)
    --
    ($(a)!0.5!(d)+(-1cm,0)$)
    node[midway, above] {update};

\end{tikzpicture}
    \caption{A simple configuration update for two private local coin tosses.}
    \label{fig:ex:cointoss}
\end{figure}

\end{example}

\noindent\textbf{$N$-execution Frame.} We now build a frame that corresponds to the execution from the perspective of a process. The constructions starts from an \emph{input frame} for private inputs $I = (\CI, \sim_p)_{p\in N}$, where $I_1 \sim_p I_2$ if and only if $I_1(p) = I_2(p)$, and iteratively applies configuration and message updates. 

Let $(G_r)_{r\in [R]}$ be a sequence with the communication patterns for each round. Further for $r\in [R]$, let $\mathbf{X}_r$ be the set containing all possible realizations of the joint variable $(X_{p_1, r}, \ldots X_{p_n, r})$.
The \emph{$N$-execution frame} $F^\CA_N$ is iteratively constructed as follows:
\[
    F_1 := (I \circ \mathbf{X_1})\otimes G_1,\;
    F_2  := (F_1 \circ \mathbf{X_2})\otimes G_2 ,\;
    \ldots,\;
    F^\CA_N := (F_{R-1} \circ \mathbf{X_R})\otimes G_R.
\]

We use the superscript $\CA$ to be consistent with literature on task solvability and dynamic epistemic logic. Historically, $\CA$ is used to emphasize the action model, which we do not model explicitly for compactness.

We write $(I,G,x)$ for the worlds of $F^\CA_N$ where $I$ is the initial input configuration, $G$ is the sequence of communication graphs applied to $I$, and $x$ is the sampled randomness. Lemma~\ref{lem:correspnence} states the relation between $F^\CA_N$ and the set of all possible probabilistic executions $\Omega$.

\begin{lemma}\label{lem:correspnence}
    There is a one-to-one correspondence between the set of all probabilistic executions $\Omega$ and the $N$-execution frame $F^\CA_N$. Furthermore, a process $p$ cannot distinguish two worlds $w = (I,G,x)$ and $w' = (I',G',x')$, denoted $w\sim^\CA_p w'$ if and only if it has the same local state in both executions, i.e., $s_p(w,R) = s_p(w', R)$.
\end{lemma}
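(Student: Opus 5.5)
The plan is to argue by induction on the number of rounds $r\in\{0,\dots,R\}$, proving both parts of Lemma~\ref{lem:correspnence} together for the intermediate frames $F_r$ (with $F_0 := I$, the input frame) and for executions truncated after $r$ rounds.

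The correspondence is read off from the construction. Unfolding the iterated updates, a world of $F_r$ is a nested tuple
\[
((\cdots((I_0,x_1),G_1),\dots),x_r),G_r),
\]
where $I_0\in\CI$, each $x_l\in\mathbf{X}_l$, and each $G_l$ is a graph from the round-$l$ pattern. We send it to the truncated execution $(I_0,(G_l)_{l\le r},(x_l)_{l\le r})$. This map is injective and surjective because the updates form exactly the Cartesian product $\CI\times\prod_l \mathbf{X}_l\times\prod_l G_l$. An execution $(I,G,x)$ is determined precisely by such a triple, with $x\in\mathbf{X}$ identified with the sequence of per-round realizations. Taking $r=R$ gives the bijection $\Omega\cong W(F^\CA_N)$.

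For the indistinguishability part, we show $w\sim_p w'$ in $F_r$ if and only if $s_p(w,r)=s_p(w',r)$.
\begin{itemize}
\item \emph{Base case ($r=0$).} Here $\sim_p$ is $I_1(p)=I_2(p)$, and $s_p(\cdot,0)=(I(p),\emptyset)$ with empty randomness, so the two conditions agree.
\item \emph{Step.} By the definitions of $\circ$ and $\otimes$, the relation $(w,x_r,G_r)\sim_p(w',x'_r,G'_r)$ holds if and only if $w\sim_p w'$, $x_{p,r}=x'_{p,r}$, and $\mathsf{in}_p(G_r)=\mathsf{in}_p(G'_r)$. We must show this is equivalent to $s_p(\cdot,r)=s_p(\cdot,r)$. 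By the induction hypothesis, the first conjunct is $s_p(\cdot,r-1)$ equality.
\end{itemize}

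The main obstacle is that the local state at round $r$ is not literally the triple (previous state, own coin, in-neighbourhood). It contains the actual messages $m_{q,r}=s_q(\cdot,r-1)$, so equality of round-$r$ states means equality of the previous state, the coin, and the \emph{set of received messages}.

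For the direction "$\sim_p$ implies equal states", equal in-neighbourhoods do not by themselves give equal message contents. Here I would lean on the convention of the communication-pattern framework of~\cite{CastanedaDRV24}: the relation $\sim^\otimes_p$ is meant to be evaluated with respect to the knowledge already encoded, i.e., $w\sim_p w'$ implies that what $p$ received so far coincides. I would make this explicit by strengthening the induction hypothesis to the statement that the received messages are determined by the world. The cleanest route is to record messages as labelled pairs $(q,l,m)$ and to verify that, in the full-information setting, the states match exactly when the required data match. If the literal update definition is too weak (it compares only $\mathsf{in}_p$ and not the senders' states), I would refine $\sim^\otimes_p$ to also require $w\sim_q w'$ for every $q\in\mathsf{in}_p(G_r)$, which is the standard action-model product reading.

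The converse direction, "equal states implies $\sim_p$", is straightforward. The round-$r$ state contains $s_p(\cdot,r-1)$ (perfect recall), $x_{p,r}$, and the set of senders of round-$r$ messages. The last item gives $\mathsf{in}_p(G_r)$, because messages are tagged by sender and round.
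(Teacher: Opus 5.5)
Your induction on rounds is the standard argument that the paper's one-line proof (a pointer to \cite{FHMV1995,CastanedaDRV24}) defers to, and the bijection part is fine. The trouble is the hedge in the middle. The obstacle you flag is real, and your first remedy cannot work. With $\sim^\otimes_p$ as literally defined, the direction $w\sim^\CA_p w'\Rightarrow s_p(w,R)=s_p(w',R)$ is false. Take two processes, $R=1$, the complete graph $G_1$, identical coins $x$, and inputs $I=(0,0)$, $I'=(0,1)$ for $(p,q)$. The worlds $(I,G_1,x)$ and $(I',G_1,x)$ are $\sim^\CA_p$-related, because $I(p)=I'(p)$, the coins coincide, and $\mathsf{in}_p(G_1)$ is the same. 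Yet $p$'s states differ, since $m_{q,1}=s_q(\cdot,0)$ carries $q$'s input. The literal update never consults $\sim_q$, so no strengthened induction hypothesis can make $\sim_p$ fine enough: a hypothesis can only record true facts about the relation as defined. You have to commit to the refined relation that also requires $w\sim_q w'$ for all $q\in\mathsf{in}_p(G_r)$, which is the distributed-knowledge-of-in-neighbours product used for full-information communication patterns. The lemma then holds for this repaired frame, not for the one literally defined in the paper.

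Once you commit, two points need care.

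First, compare the senders in $F_{r-1}$, not in $F_{r-1}\circ\mathbf{X}_r$ as the usual product reading of $(F_{r-1}\circ\mathbf{X}_r)\otimes G_r$ would. Since $m_{q,r}=s_q(\cdot,r-1)$ omits $x_{q,r}$, that reading lets $p$ distinguish worlds that differ only in $q$'s round-$r$ coin, although $p$'s operational state does not record that coin. This already happens for $R=1$ with a coin at $q$, and it breaks your converse direction. So either keep your $F_{r-1}$ version, which is not literally the standard product, or change the messages to carry the current coin, as the key-appending in Example~\ref{ex:lynch1} presupposes.

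Second, with the refined relation the induction hypothesis must be stated for all processes simultaneously and applied at the senders in both directions.
\begin{itemize}
\item Forward: $w\sim_q w'$ gives $s_q(w,r-1)=s_q(w',r-1)$, hence equal $m_{q,r}$, hence equal $\rec_p(G,r)=\rec_p(G,r-1)\cup\{m_{q,r}\mid q\in\mathsf{in}_p(G_r)\}$.
\item Converse: recovering $s_p(\cdot,r-1)$, $x_{p,r}$ and $\mathsf{in}_p(G_r)$ is no longer enough. You must also read off each $m_{q,r}=s_q(\cdot,r-1)$ from $p$'s state and apply the hypothesis for $q$ to obtain $w\sim_q w'$.
\end{itemize}
Both steps rely on messages being tagged by sender and round, as you assume.
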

\begin{proof}
    The proof is standard (see for example~\cite{FHMV1995,CastanedaDRV24}).
\end{proof}

\noindent\textbf{$\adv$-execution frame.}  
We can build a probabilistic adversarial execution frame~$F^\CA_\adv$ from the worlds of $F^\CA_N$. The adversarial indistinguishability relation is given by:
\[
(I,G,x)\sim^\CA_\adv (I',G',x') \quad \text{iff.}\quad I=I' \text{ and } G = G'.
\]
The probability space $\CO = (\mathbf{X},2^\mathbf{X},\mu)$ of the operational model, naturally induces a probability space $\CO_\alpha$ for every probabilistic execution $\alpha = (I,G,x) \in \Omega$,
which defines a probabilistic adversarial frame. Here, the probability measure must be conditioned on being in worlds of~$[\alpha]_\adv$. The measure normalizes correctly, if the operational measure normalizes. Formally, we set $\CO_\alpha := ([\alpha]_\adv, 2^{[\alpha]_\adv},\mu|[\alpha]_\adv)$, where $\mu|[\alpha]_\adv$ is the operational measure $\mu$ conditioned on the adversarial choice of $I$ and $G$.

Constructing $F^\CA_\adv$ after the configuration update is possible because the operational model already encodes randomness via realizations of the joint random variable $\mathbf{X}$. 
In more general dynamic epistemic settings, probabilistic updates modify probability distributions during the update process itself. This requires genuinely dynamic probabilistic update mechanisms as developed in~\cite{ProbabDEL-Kooi-2003,vanBenthemGerbrandyKooi2009}.\\

\noindent\textbf{Execution model.} 
The set $\prop$ contains all propositional variables. The variables describe either inputs or realizations of a random variable $X_{p,r}$. The propositional variable $i_p$ represents process $p$'s binary input, and $x^s_{p,r}$ encodes the randomness of process $p$ in round~$r$. Notice, that~$s$ depends on the support of the random variable $X_{p,r}$, denoted by $\textsf{supp}(X_{p,r})$. 
Formally, the set propositional variables is:
\[
\prop := \{i_p \mid  p\in N\} \cup \{x^s_{p,r} \mid r\in[R],p\in N \text{ and } s \in \textsf{supp}(X_{p,r})\}.
\]
The \emph{execution model} is a triple $\CI[\CA] = (F^\CA_N, F^\CA_\adv, V^\CA)$, where $V^\CA$ is a valuation such that:
\begin{enumerate}
    \item $i_p \in V^\CA((I,G,x)))$ if and only if $I(p) = 1$;
    \item $x^s_{p,r} \in V^\CA((I,G,x))$ if and only if $x_{p,r} = s$.
\end{enumerate}
Again, we use $\CI[\CA]$ to adhere to previous literature. The notation emphasizes that the execution is built from the set of possible inputs by applying updates.

\section{Task solvability}\label{sec:task solv}
We now provide a novel notion of probabilistic task solvability for epistemic models. This definition does not depend on the operational model for randomized coordinated attack, and characterizes the solvability of various randomized tasks.
We follow the standard approach of relating task solvability to the existence of a map from inputs to admissible outputs. \\

\noindent\textbf{Task model.} Let $\CT$ be a task. The \emph{task model} $\CI[\CT] = (F^\CT_N, F^\CT_\adv, V^\CT)$ is built over a set of output worlds~$W^\CT$, where each world is of the form $(I, G, o)$. Here, $o$ is an output vector compatible with the input configuration $I\in \CI$. Two worlds are indistinguishable for a process if and only if they have the same outputs and inputs, i.e., $o_p = o'_p$ and $I(p) = I'(p)$. The valuation $V^\CT$ assigns propositional variables denoting the inputs and outputs as in Section~\ref{sec:translation}. Encoding the delivered messages is necessary for modeling the adversary. The adversarial indistinguishability relation is as before. We stress that the model $\CI[\CT]$ is deterministic. Similar to the operational model, probabilities are induced by a decision function.\\

\noindent \textbf{Decisions.} Given an execution model $\CI[\CA]$, we model the decision of processes as a function $\delta : W^\CA \rightarrow \{0,1\}^n$ from the worlds of the execution model to a binary output vector. The only condition is that processes decide the same values in indistinguishable global states, i.e.,
for all processes $p\in N$, we have $w\sim^\CA_p w'$ implies that $\delta(w)_p = \delta(w')_p$. 

Given a decision function $\delta$, we transform the adversarial task model $F^\CT_\adv$ into an $\mathbb{A}$-probabilistic output frame $\mathbb{F}^O_\adv$. The set of worlds for $\mathbb{F}^O_\adv$ is
\[
W^O := \{(I,G,o) \in W^\CT \mid \exists (I,G,x)\in W^\CA.\delta((I,G,x)) = o\}.
\]
In words, the frame $\mathbb{F}^O_\adv$ is formed over the worlds of $F^\CT_\mathbb{A} $that are actually produced by $\delta$. This is needed in order to properly define a probability measure for each world. Notice, that this deviates from the deterministic notion of task solvability in~\cite{GoubaultLR21}.

We define a probability measure~$\mu_{w^*}^O$  for each world 
 $w^* := (I,G,\cdot) \in W^O$ and event $E\subseteq [w^*]_\adv$, in terms of the operational measure conditioned on the choice of $I$ and $G$:
 \[
 \mu_{w^*}^O(E) = \mu(\{ x\in \mathbf{X} \mid  (I,G,\delta(I,G, x))\in E\}).
 \]

The probability space for each world $w^*\in W^O$ is then defined as before.
The measure~$\mu^O$ normalizes correctly because the preimages induced by $\delta$ partition the worlds of $\CM^\CA$, and the operational measure $\mu$ sums to $1$ on every adversarial equivalence class. The output model $\CM^O := (F^O_N,\mathbb{F}^O_\adv,V^O)$ is obtained by restricting $F^\CT_N$ to $W^O$ as well, and defining an \emph{output valuation} $V^O$ such that
\[
\CM^O, (I,G,\bar{o}) \models o_p \quad \text{iff.}\quad  \bar{o}_p = 1,
\]
where the propositional variable $o_p$ represents $p$'s binary decision.
Intuitively, the output model abstracts only the final decisions relevant for task specifications.\\

\begin{figure}
    \centering
    \begin{tikzpicture}[node distance=7.5mm]

\node at (0,0) (a) {$\CI$};
\node at (7.5,3) (b) {$\CM^O := (F^O_N, \mathbb{F}^O_\adv, V^O)$};
\node at (0,3) (c) {$\CI[\mathcal{A}]:= (F^\CA_N, F^\CA_\adv, V^\CA)$};
\node at (7.5,0) (d) {$\CI[\CT]:= (F^\CT_N, F^\CT_\adv, V^\CT)$};

\draw[->] (d) -- node[midway, right] {implies $V^O$} (b);
\draw[->] (a) -- node[midway, below] {task specification yields $\CI[\CT]$} (d);
\draw[->] (a) -- 
  node[midway, left, align=center]
  {communication and\\configuration updates}
(c);
\draw[->, dashed] (c) -- 
  node[midway, above, align=center]
  {$\delta$ induces $(F^O_N, \mathbb{F}^O_\adv)$}
(b);

\end{tikzpicture}
    \caption{Illustration of probabilistic epistemic task solvability. The execution model~$\CI[\CA]$ and the task model $\CI[\CT]$ can be constructed based on the input frame $\CI$. The task model together with a decision function $\delta$ induce the output model $\CM^O$. A task is solvable if and only if there exists a decision function $\delta$ such that $\CM^O$ satisfies the logical representation of the task specification.}
    \label{fig:task_solvability}
\end{figure}

\noindent\textbf{Characteristic formulas.} Given a world $w:=(I,G,x)$ of the execution model $\CI[\CA]$, it is a standard result that there exists a non probabilistic formula $\psi_w \in \CL$ that is exactly true in $w$ and nowhere else~\cite{FHMV1995,Hintikka1962-HINKAB-4}. This formula is called the \emph{characteristic formula of $w$}. Such formulas exist because the sets $W^\CA$ and $\prop$ are finite\footnote{Recall that inputs are binary, random variables have finite support, and the number of rounds is fixed.}, and every two worlds can be distinguished by at least one process\footnote{This is because any two communication graphs can be distinguished by at least one process. The literature refers to this property also as \emph{proper}~\cite{DBLP:conf/stacs/GoubaultLR22,DBLP:journals/corr/abs-2002-08863}}.  

Given a decision function $\delta$, we can use characteristic formulas to describe the sets of worlds, where a process decides 1. The formula $\psi^p_{I,G}$ describes this set conditioned on the adversarial choice of $I$ and $G$. If $\neg \psi^p_{I,G}$ is true in a world $(I,G,x)$, then process $p$ decides 0. We use the characteristic formula $\psi^\mathsf{Dis}_{I,G}$ to encode the worlds where the adversary's choice of  $I$ and $G$ was successful, i.e., there exist processes $p$ and $q$ such that $\psi^p_{I,G} \land \neg \psi^q_{I,G}$ is true.

Since all our logical models are i) based on a measurable operational model, and ii) only condition the operational measure based on equivalence classes,  we can relate the probability of a process deciding 1 in the output model $\CM^O$ to the probability of deciding 1 in the operational model:
\[
\CM^O,(I,G,o) \models \SFP_{\leq s}o_p \quad \text{iff.}\quad \CI[\CA] \models \SFP_{\leq s}\psi^p_{I,G} \quad \text{iff.}\quad \prob_{I,G}[p \text{ decides }1]\leq s
\]
The middle statement is true in all worlds because if a world $w'$ belongs to a different adversarial equivalence class, e.g. $w'\notin[(I,G,\cdot)]_\adv$, then $\psi^p_{I,G}$ is false in every world of $[w']_\adv$, and therefore has probability 0.\\

\noindent\textbf{Task solvability.} For deterministic tasks, there is one input model and one output model. However, for randomized tasks, there exists one set of feasible output worlds, but infinitely many output models that satisfy probabilistic properties. Since characteristic formulas exist, the properties of a task $\CT$ can be expressed as logical formulas, called $\CT$-formulas. Given an execution model $\CI[\CA]$, we say that the decision function $\delta$ solves a task~$\CT$ if and only if every world of the model $\CM^O$ induced by $\delta$  satisfies the $\CT$-formulas. More formally, for any $\CT$-formula $\tau$, it holds that $\CM^O \models \tau$. Figure~\ref{fig:task_solvability} illustrates task solvability. 

\begin{remark}
For deterministic tasks, the presented notion of task solvability is closely related to the one introduced in~\cite{GoubaultLR21}. While the underlying logical frameworks differ, we conjecture the notions are equivalent, because $I[\CT]$ fully characterizes $\CM^O$ in the deterministic setting. 
\end{remark}
\section{Applications}\label{sec:information levels intro}

Section~\ref{sec:model} introduced a probabilistic epistemic framework for randomized distributed computation by relating executions, knowledge, and probabilistic outputs. We now apply this framework to randomized coordinated attack. First, we show that the information levels introduced by Varghese and Lynch correspond exactly to finite levels of iterated knowledge. We then use the same epistemic structure to explain the lower bound: decision probabilities can only change gradually between indistinguishable executions, and executions with no communication and full communication are connected by such indistinguishability chains.

\subsection{Information levels}\label{sec:information levels}

Given a probabilistic execution $\alpha := (I,G,x)$, we informally define the \emph{information level} of a process $p$ inductively:
\begin{description}
    \item[Base case:] At the beginning of round 1 all processes have level 0;
    \item[Induction:] If $p$ receives messages indicating that the minimum level among all processes is~$k$, then $p$ sets its level to $k+1$.
\end{description}

The formal definition of information levels, given in ~\cite{Lynch96}, can be found in the Appendix.
We denote the level of process $p$ in execution $\alpha$ in round $r$ with $\ell(p,\alpha, r)$. Theorem~\ref{thm:epistemic levels} relates the information level of a process at the end of the last round of a probabilistic execution $\alpha := (I,G,x)$ with its knowledge in the world $(I,G,x)$ of the execution model. The expression~$\neg K_pE^n\phi_\alpha$ ensures that processes have \emph{exactly} level $n$.

\begin{theorem}\label{thm:epistemic levels}
    Let $\CI[\CA]$ be the execution model. For every $\alpha := (I,G,x) \in \Omega$, we find that 
    \[
    \ell(p, \alpha,R)=n\quad \text{if{f}.} \quad \CI[\CA],\alpha \models K_pE^{n-1}\phi_\alpha \land\lnot K_pE^n\phi_\alpha,
    \]
    where $\phi_\alpha$ encodes the inputs in $\alpha$.
\end{theorem}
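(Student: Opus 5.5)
We plan to prove a stronger statement by induction on the round number $r$, and then read off the theorem at $r=R$. The formal definition of levels (Appendix, following Lynch) lets $\ell(p,\alpha,r)$ depend only on the communication graphs $G_1,\dots,G_r$. We let $\CI[\CA]_r$ denote the model after $r$ rounds of updates. The auxiliary claim is that for every $k\geq 1$,
\[
\ell(p,\alpha,r)\geq k \quad\text{iff}\quad \CI[\CA]_r,\alpha\models K_pE^{k-1}\phi_\alpha .
\]
Here $\phi_\alpha$ is the conjunction of literals $i_q$ or $\neg i_q$ fixing the input vector $I$. The theorem then follows by taking the claimed equivalence at $k=n$ and at $k=n+1$ and negating the latter. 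The case $n=0$ needs a separate remark: level $0$ should correspond to $\lnot K_p\phi_\alpha$, with the left conjunct $K_pE^{-1}\phi_\alpha$ read as $\top$.

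The key combinatorial step is to describe $\sim_p$ through Lemma~\ref{lem:correspnence}. Worlds $w\sim_p w'$ exactly when $p$ has the same local state in both, which in a full-information protocol means the same inputs and coins of every process in $p$'s causal past, and the same delivery history along the causal paths into $p$. Unfolding $K_pE^{k-1}$ semantically, $\alpha\models K_pE^{k-1}\phi_\alpha$ holds iff every $\sim$-chain $\alpha\sim_p w_1\sim_{q_1} w_2\cdots\sim_{q_{k-1}} w_k$ of length $k$ ends in a world with the same input vector $I$. We then prove two lemmas by simultaneous induction on $r$. The first lemma is that $p$ knows $\phi_\alpha$ after round $r$ iff $p$ has heard, directly or transitively, from every process by round $r$; this is exactly $\ell(p,\alpha,r)\ge 1$ in Lynch's definition, where level 1 requires hearing all inputs. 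The second lemma is that $K_pE^{k}\phi_\alpha$ holds after round $r$ iff there is a round $r'\le r$ such that $p$ hears in round $r'$ from every process $q$ whose state at round $r'-1$ satisfies $K_qE^{k-1}\phi_\alpha$. That is, $p$ must see at round $r'$ evidence that everyone had level $\geq k$.

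For the direction from level to knowledge, we use that $p$ received a message from each $q$ carrying $q$'s state at level $\ge k$. Any world $w'\sim_p\alpha$ shares those states, so by the induction hypothesis $K_qE^{k-1}\phi_{\alpha}$ holds at $w'$ for every $q$. This gives $E^k\phi_\alpha$ at every $w'\sim_p\alpha$, hence $K_pE^k\phi_\alpha$. We must check that $\phi_{w'}=\phi_\alpha$; this holds because knowledge of $\phi_\alpha$ is already fixed in $p$'s state once its level is at least $1$.

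For the converse we argue contrapositively. If the level of $p$ is some $m<k+1$, we construct an indistinguishability chain of length $m+1$ leading to a world with a different input vector. Following Lynch's lower-bound-style argument, we pick the process $q$ whose missing information caused the minimum, build a world $w'\sim_p\alpha$ that agrees with $\alpha$ on $p$'s view but drops all later messages out of $q$, and recurse, ending at a world where some input $p$ never learned about is flipped. We expect this chain construction to be the main obstacle. It requires an explicit world modification, namely deleting edges outside $p$'s causal past, together with an inductive invariant that $q$'s level in the modified world is exactly $m-1$. This works because the adversary may pick arbitrary graphs, so any such deletions still yield worlds of $F^\CA_N$. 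Coins play no role, since $\phi_\alpha$ mentions only inputs and coin values can be copied unchanged along the chain.
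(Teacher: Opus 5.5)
Your plan follows the paper's proof. In the forward direction, $p$'s view contains a level-$\ge k$ state of each $q\neq p$, and that state reappears in every $p$-indistinguishable world. In the converse, you freeze the minimizing $q$ after the last round $t$ with $(q,t)\le_p(p,R)$ and recurse. Your monotone reformulation ``$\ell(p,\alpha,R)\ge k$ iff $K_pE^{k-1}\phi_\alpha$'' is cleaner than the paper's direct ``exactly $n$'' induction, whose contrapositive step tacitly needs exactly this strengthening.

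\emph{The converse fails as written, because the world modification is wrong.} You drop the later messages \emph{out of} $q$, but $q$'s level is determined by what $q$ \emph{receives}. So $\ell(q,w',R)$ is unchanged and the invariant ``level exactly $m-1$'' breaks. Take two processes with everything delivered except $q$'s round-$R$ message to $p$. Then $p$ has level $R-1$ with $f_q=R-2$, yet $q$ keeps level $R$ in your $w'$. Hence $K_qE^{R-1}\phi_\alpha$ holds at $w'$, and no short chain starts there. You must instead drop every message delivered \emph{to} $q$ in rounds $t+1,\dots,R$, as the paper does. Then $\le_q$ acquires no new pairs involving other processes, so $\ell(q,w',R)=\ell(q,\alpha,t)=m-1$. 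Also $p$'s state is unchanged, since by maximality of $t$ no state of $q$ after round $t$ lies in $p$'s causal past.

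\emph{Your second lemma is false for $n\ge 3$.} After a complete first round, deliver only $q\to p$ in round $2$ and only $s\to p$ in round $3$. Then $K_pE\phi_\alpha$ holds at round $3$, because $p$ holds $q$'s round-$1$ state and $s$'s round-$2$ state, and both know all inputs. Yet there is no single round in which $p$ hears from both $q$ and $s$ after they know $\phi_\alpha$. The correct condition is per process: for each $q\neq p$ some $(q,t_q)\le_p(p,r)$ with $\ell(q,\alpha,t_q)\ge k$, where $t_q$ may differ across $q$ and be reached via relays. That per-process condition is what your forward argument actually uses.

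\emph{Induction on $r$ does not supply what you invoke.} It yields $K_qE^{k-1}\phi_\alpha$ in the round-$t_q$ model rather than the final one. It also omits the conjunct $K_pE^{k-1}\phi_\alpha$ at $w'$ that $E^k\phi_\alpha$ requires for $q=p$. The fix is to induct on $k$ at the fixed round $R$, using two facts:
\begin{enumerate}
\item levels are non-decreasing in time, so $\ell(q,w',R)\ge\ell(q,\alpha,t_q)\ge k$ for all $w'\sim_p\alpha$;
\item levels are functions of local states, so $\ell(p,w',R)=\ell(p,\alpha,R)$.
\end{enumerate}
Your recursive chain construction in the converse is already an induction on the level.
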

\begin{proof}
    See Appendix.
\end{proof}

We can relate levels to the validity condition of randomized coordinated attack. Recall that validity ensures that if all messages are delivered and all processes have input 1, then they output 1.
Moreover, let $\psi_{\mathsf{Dis}}$ be the formula
 encoding worlds where processes decide differently. For any world $w = (I,G,o)$, the $\CT$-formulas of randomized coordinated attack are
\[
    \psi_{\mathsf{val}} := \left( \bigwedge_{p\in N}i_p \land E^{R}\phi_w\right)\rightarrow \bigwedge_{p\in N} o_p \quad \text{and}\quad 
    \psi_{\mathsf{randA}} := \SFP_{\leq \epsilon}\psi_{\mathsf{Dis}}.
\]
By unfolding the logical operators, we get
\[
E^{R}\phi_w \equiv \bigwedge_{p\in N}K_pE^{R-1}\phi_w,
\]
and by Theorem~\ref{thm:epistemic levels}, this means that all processes have at least level $R-1$ at the end of the execution. Notice that this can only be true if all messages were delivered. We do not need to formulate \textbf{Termination} as it is guaranteed for finite rounds. Example~\ref{ex:semantics} shows how we can use the previously developed machinery to fully characterize randomized coordinated attack in the communication model of Section~\ref{sec:model}.

\begin{example}\label{ex:semantics} This example shows why the $\CT$-formulas $\psi_{\mathsf{val}}$ and $\psi_{\textsf{randA}}$ characterize randomized coordinated attack.
Let $\delta$ be a decision function solving the probabilistic coordinated attack problem. By Lemma~\ref{lem:correspnence}, the execution model $\CI[\CA]$ captures every execution in $\Omega$.
As explained in Section~\ref{sec:task solv}, the decision function $\delta$ together with the task specification produce an output model $\CM^O$.
Since $\delta$ solves the problem, it holds that for any choice of $I$ and $G$, the probability of the adversary's success can be at most some~$\epsilon$, which is logically captured by:
\begin{align*}
\CM^O,(I,G,o) \models \SFP_{\leq \epsilon}\psi_{\mathsf{Dis}} \quad & \text{iff.}\quad \CI[\CA] \models \SFP_{\leq \epsilon}\psi^\mathsf{Dis}_{I,G} \quad \\&\text{iff.}\quad \mu( \{x \in \mathbf{X}  \mid (I,G,x) \text{ violates \textbf{Agreement}}\} )\leq \epsilon.
\end{align*}
Moreover, if $I$ is the input configuration where all processes have input 1 and $G$ is the sequence of communication graphs where all messages are delivered, then  any correct decision function must output only 1's, because the premise of $\psi_{\mathsf{val}}$ is met. Therefore, for any random value $x$, we find for the output model $\CM^O, (I,G,\delta((I,G,x)) \models \SFP_{=1}\bigwedge_{p\in N}o_p$.

\end{example}

We now explain the intuition behind Theorem~\ref{thm:epistemic levels}.
Let $\phi$ be the formula representing the actual input configuration. We now analyze process $p$'s knowledge at levels 1 to 3. \\

\noindent\textbf{Level 1.}
Process $p$ has level 1 if and only if it has heard of every other process. Process $p$ has heard of process $q$ (or $r$) if and only if:
\begin{enumerate}
    \item $p$ receives a message from the corresponding process; or
    \item $p$ receives a message from  process $q$ (or $r$) which has heard of $r$ (or $q$) in an earlier round.
\end{enumerate}
Since a message contains all the inputs the sender knows about, having level 1 is equivalent to knowing all the inputs, i.e., $K_p \phi$. \\

\noindent\textbf{Level 2.}
Process $p$ has level 2 if and only if it knows that every other process has level 1. Epistemically, process $p$ has level 2 if and only if it knows that $q$ and $r$ know all the inputs: 
\[
K_p(K_q\phi \land K_r \phi) \equiv K_pE_{-p}\phi,
\]
where $E_{-p}$ means everybody except $p$ knows.
By construction of our logical model from the operational one, we have that inputs are unique and process $p$ knows its own input. Together with $K_pE_{-p}\phi$, this implies that process $p$ knows that all processes know $\phi$, i.e., $K_pE\phi$.\\

\noindent\textbf{Level 3.}
Process $p$ has level 3 if and only if it knows that every other process has level 2. This is expressed as an epistemic formula below, which reduces to $K_pE^2\phi$.
\[
    K_p(K_qE\phi \land K_rE\phi) \equiv K_p(K_q(K_p\phi \land K_q\phi\land K_r \phi) \land K_r(K_q\phi \land K_p \phi \land K_r\phi)).
\]
In the reduction, we use logical reasoning from $\mathsf{S5}$ and use results from~\cite{Lynch96}. The details of the logical derivation can be found in the Appendix where we prove Theorem~\ref{thm:epistemic levels}.



Theorem~\ref{thm:epistemic levels} allows us to express the algorithm of~\cite{VargheseL96} succinctly and elegantly as a knowledge-based program~(see~\cite{FHMV1995}), shown in Algorithm~\ref{alg:rand_coord_attack} . Formulated as a knowledge-based program, the algorithm becomes surprisingly simple. At the end of round $R$ a process $p$ decides 1 only if it knows that all processes have input 1, e.g. $K_p\phi_{\mathbf{1}}$ , and its iterated knowledge of this fact is at least $k-1$ levels deep. It can be shown that the algorithm's correctness is restricted to the probability~$R^{-1}$. Indeed, from Lynch and Varghese~\cite{VargheseL96}, processes decide differently if and only if one process, say $p$, has level exactly $k$, and another one, say $q$, has level exactly $k-1$. This happens with probability $\frac{1}{R}$. Translated to our model, the following is true:
\[
\CI[\CA] \models\SFP_{=\frac{1}{R}} \left( x^k_{1,1} \land \bigvee_{p,q\in N}(K_pE^{k-1}\phi_{\mathbf{1}} \land \neg K_qE^{k-1}\phi_{\mathbf{1}})\right)
\]

\begin{algorithm}[t]
\caption{A knowledge based program for the algorithm in~\cite{VargheseL96} for process $p_i$. The formula $\phi_{\mathbf{1}}$ denotes that the inputs of all processes are 1.}
\label{alg:rand_coord_attack}
\begin{algorithmic}[1]
\If{$p_i = p_1 \land \mathsf{round}=1$}
    \State $k \gets \mathsf{coinflip}(k)$ \Comment{Generation of the \emph{key} $k$.}
    \State $x^k_{1,1} \gets \mathsf{true}$
\EndIf

\If{$K_{p_i} x^k_{1,1} \land K_{p_i} E^{l}\phi_{\mathbf{1}} \land \neg K_{p_i} E^{l+1} \phi_{\mathbf{1}}$ for some $l\geq k-1$}
    \State $\textbf{decide } 1$
\Else
    \State $\textbf{decide } 0$
\EndIf

\end{algorithmic}
\end{algorithm}
\subsection{Lower bound}\label{sec:lb}
We now show that for any randomized coordinated attack protocol, the probability of disagreeing is at least $R^{-1}$. This bound improves the prior one of $(R+1)^{-1}$ and is tight. The key idea is that indistinguishability in the execution model $\CI[\CA]$ limits how quickly decision probabilities can change in the output model~$\CM^O$. Before stating the result, we briefly recall task solvability from Section~\ref{sec:task solv}.

Consider Figure~\ref{fig:task_solvability} again, and assume a protocol that solves randomized coordinated attack in the operational model with decision function $\delta$. As demonstrated in Section~\ref{sec:translation}, the operational model induces an equivalent execution model $\CI[\CA]$, while the task specification induces the task model $\CI[\CT]$, which has no probabilities. The decision function $\delta$ connects these two structures by assigning outputs to executions, thereby inducing the $\adv$-probabilistic output model $\CM^O$. Intuitively, the output probabilities are inherited from executions. Notice, that if the protocol in the operational model did not solve the task, the induced model $\CM^O$ would not satisfy the task formulas.

We can now state our result and sketch its proof. The main argument is based on a \emph{$pq$-chain}, which is a sequence of executions $\alpha_1, \ldots \alpha_n \in W^\CA$, such that
\[
\alpha_1 \sim^\CA_p\alpha_2 \sim^\CA_q \alpha_3 \ldots \sim^\CA_p \alpha_n.
\]
Intuitively, output probabilities cannot change abruptly between executions along $pq$-chains, which allows us to restrict them from above. As standard in indistinguishability arguments, we start with an execution $\alpha_1$ in which no messages are delivered, and end in an execution $\alpha_n$, where all messages are received. By the properties of randomized coordinated agreement, process $p$ must decide $0$ in $\alpha_1$ and $1$ in $\alpha_n$, if all inputs are 1.

Theorem~\ref{thm:lb} states the lower bound of $R^{-1}$ for the probability of disagreement for randomized coordinated attack. It was proven in~\cite{VargheseL96} that Algorithm~\ref{alg:rand_coord_attack} has a probability of disagreeing of at least $R^{-1}$, which makes our lower bound tight. The proof can be found in the Appendix and we will outline it subsequently. 

The first step towards proving Theorem~\ref{thm:lb} is to establish the standard result stated Lemma~\ref{lem:prob_difference} for the operational model. Informally, it states that the probability of two processes deciding the value 1 can differ by at most $\epsilon$. Intuitively, if the difference exceeds $\epsilon$, then the processes become more likely to disagree, which violates \textbf{Randomized agreement}. This result is a crucial tool when reasoning about output probabilities along a $pq$-chain. As illustrated in Section~\ref{sec:translation}, this result from the operational model directly translates to the epistemic output model.
Corollary~\ref{cor:difference of probabilities} states the result formally in the epistemic setting.

\begin{lemma}\label{lem:prob_difference}
In the operational model, for any protocol that solves randomized coordinated attack, it holds that 
    $|\prob_{I,G}[i \;\operatorname{ decides }]-\prob_{I,G}[j \;\operatorname{ decides }\; 1 ]| \leq \epsilon$.
\end{lemma}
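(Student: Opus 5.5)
Fix an adversarial choice $(I,G)$ and two processes $i,j$. Everything is then a statement about the single probability space $(\mathbf{X},2^{\mathbf{X}},\mu)$ conditioned on $(I,G)$. Define the events
\[
A_i := \{x \in \mathbf{X} \mid \delta_i(s_i((I,G,x),R)) = 1\},\qquad A_j := \{x \in \mathbf{X} \mid \delta_j(s_j((I,G,x),R)) = 1\},
\]
so that $\prob_{I,G}[i \text{ decides } 1] = \mu(A_i)$ and $\prob_{I,G}[j \text{ decides } 1] = \mu(A_j)$. The claim is that these two measures differ by at most $\epsilon$.

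\textbf{Step 1 (disagreement events).} The event that $i$ and $j$ decide differently is the symmetric difference $A_i \triangle A_j = (A_i\setminus A_j)\cup(A_j\setminus A_i)$. Any $x$ in it yields an execution $(I,G,x)$ in which two processes output different values. So $A_i \triangle A_j$ is contained in the set of $x$ for which $(I,G,x)$ violates \textbf{Agreement}.

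\textbf{Step 2 (use randomized agreement).} Since the protocol solves randomized coordinated attack, this violation set has $\mu$-measure at most $\epsilon$ for every adversarial choice $(I,G)$; this is exactly the reading of $\psi_{\mathsf{randA}}$ in Example~\ref{ex:semantics}. By monotonicity of $\mu$ we get $\mu(A_i\triangle A_j)\le \epsilon$.

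\textbf{Step 3 (elementary inequality).} We have
\[
|\mu(A_i)-\mu(A_j)| = |\mu(A_i\setminus A_j)-\mu(A_j\setminus A_i)| \le \mu(A_i\setminus A_j)+\mu(A_j\setminus A_i) = \mu(A_i\triangle A_j)\le \epsilon .
\]
The first equality holds because subtracting the common part $\mu(A_i\cap A_j)$ from both terms leaves the difference unchanged.

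No step is a real obstacle. The only point needing care is that randomized agreement is quantified per adversarial choice $(I,G)$, that is, conditioned on $I$ and $G$, rather than averaged over adversaries. That is why the argument is carried out for a fixed $(I,G)$. Measurability is automatic because the space is finite with the power-set $\sigma$-algebra.
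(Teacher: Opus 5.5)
Your proof is correct and follows essentially the same route as the paper. Both fix $(I,G)$, note that the symmetric difference of the two ``decides 1'' events is contained in the disagreement event, bound the latter by $\epsilon$ using randomized agreement, and conclude with $|\mu(A)-\mu(B)|\le\mu(A\triangle B)$; you verify this last inequality explicitly, whereas the paper cites it as a standard property of probability measures.
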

\begin{proof}
    See Appendix.
\end{proof}

\begin{corollary}\label{cor:difference of probabilities}
Let $\CM^O$ be an output model based on a decision function $\delta$ that solves randomized coordinated attack. For every process $p$, if 
$\CM^O, (I,G,o) \models \mathsf{P}_{= s}o_p$, then for all $q\neq p$, it holds that $\CM^O, (I,G,o) \models \mathsf{P}_{\leq s+\epsilon}o_q \land \mathsf{P}_{\geq s-\epsilon}o_q$.
\end{corollary}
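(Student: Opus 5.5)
The plan is to transfer Lemma~\ref{lem:prob_difference} from the operational model to the output model $\CM^O$, using the correspondence between $\SFP$-formulas in $\CM^O$ and operational probabilities established in Section~\ref{sec:task solv}.

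First fix a world $w^*=(I,G,o)\in W^O$. By the truth definition for $\SFP_{\leq s}$, together with the abbreviations $\SFP_{=s}$ and $\SFP_{\geq s}$, the hypothesis $\CM^O,w^*\models\SFP_{=s}o_p$ says
\[
\mu^O_{w^*}(\{v\in[w^*]_\adv \mid \CM^O,v\models o_p\})=s.
\]
By the definition of $\mu^O_{w^*}$, this quantity equals $\mu(\{x\in\mathbf{X}\mid \delta((I,G,x))_p=1\})$. That set is exactly the event $\prob_{I,G}[p \text{ decides } 1]$ of the operational model, where $\delta_p$ is read on local states. The reading is legitimate because $\delta$ is constant on $\sim^\CA_p$-classes, and by Lemma~\ref{lem:correspnence} these classes are exactly the sets of executions with equal local state $s_p(\cdot,R)$. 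This is the displayed equivalence
\[
\CM^O,(I,G,o)\models\SFP_{\leq s}o_p \quad\text{iff}\quad \prob_{I,G}[p\text{ decides }1]\leq s,
\]
applied in both directions. It gives $\prob_{I,G}[p\text{ decides }1]=s$.

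Next, since $\delta$ solves randomized coordinated attack in $\CM^O$, the corresponding operational protocol satisfies Randomized Agreement with parameter $\epsilon$. Example~\ref{ex:semantics} records this equivalence between $\CM^O\models\SFP_{\leq\epsilon}\psi_{\mathsf{Dis}}$ and the operational bound. Lemma~\ref{lem:prob_difference} then gives, for every $q\neq p$,
\[
s-\epsilon\leq\prob_{I,G}[q\text{ decides }1]\leq s+\epsilon.
\]

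Finally, translate back. The same equivalence applied to $q$ turns the upper bound into $\CM^O,w^*\models\SFP_{\leq s+\epsilon}o_q$. For the lower bound, the probability of $\neg o_q$ is one minus that of $o_q$, since $\mu^O_{w^*}$ normalizes on $[w^*]_\adv$. So the lower bound yields $\SFP_{\geq s-\epsilon}o_q$.

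The only delicate step is the identification of $\mu^O_{w^*}$ of the $o_p$-worlds with the operational decision probability. One must check that the preimage under $\delta$ of the $o_p$-worlds in $[w^*]_\adv$ consists precisely of the realizations $x$ with $(I,G,x)$ deciding $1$ for $p$, and that no mass leaks to other adversarial classes. This holds because $\mu^O_{w^*}$ is defined only via $x$ with $I,G$ fixed, and the induced worlds $(I,G,\delta(I,G,x))$ all lie in $[w^*]_\adv$. The rest is bookkeeping with the abbreviations of $\SFP$.
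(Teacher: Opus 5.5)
Your proposal is correct and follows the same route as the paper. The paper states the corollary without a separate proof, saying Lemma~\ref{lem:prob_difference} ``directly translates'' through the correspondence between $\SFP$-formulas in $\CM^O$ and operational decision probabilities. Your argument spells that transfer out: you identify $\mu^O$ of the $o_p$-worlds with $\prob_{I,G}[p \text{ decides } 1]$ via the definition of $\mu^O$ and Lemma~\ref{lem:correspnence}, apply the lemma, and translate back, with normalization handling the lower bound.
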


We now motivate how probabilities can be bound from above along a $pq$-chain by applying Corollary~\ref{cor:difference of probabilities} to bind probabilities of deciding the value 1.
Let $\alpha_1 := (I_1,G_1,x_1) \in W^\CA$ and $\alpha_2:= (I_2,G_2,x_2) \in W^\CA$ with $\alpha_1 \sim^\CA_p \alpha_2$ be two probabilistic executions. By Lemma~\ref{lem:indist_preserves_prob}, equal local states preserve output probabilities, and thus it holds that:
\[
\prob_{I_1,G_1}[p \text{ decides }1] = \prob_{I_2,G_2}[p \text{ decides }1].
\]
This directly translates to the output model $\CM^O$. Indeed, by construction, we can relate the operational model to the output model $\CM^O$, via the execution model $\CI[\CA]$, because:
\[
\CM^O,w \models \SFP_{=s}o_p \quad \text{iff.} \quad \CI[\CA] \models \SFP_{= s}\psi^1_p \quad\text{iff.}\quad \prob_{I,G}[p \text{ decides }1]= s.
\]
Therefore, it holds that:
\[
\CM^O,(I,G, \delta(\alpha_1)) \models \SFP_{=s}o_p \quad \text{and}\quad \CM^O,(I,G,\delta(\alpha_2))\models \SFP_{=s}o_p.
\]
Consider now the four worlds below, and assume that the probability of $p$ and $q$ deciding the value 1 in $w_0$ is 0:

\begin{center}

    \begin{tikzpicture}[node distance=7.5mm,  world/.style={
        circle,
        draw,
        minimum size=4mm,
        fill=gray!20,
        inner sep=1pt,
        font=\small
    }]

\node[world] (a) {$w_0$};
\node[world] (b) [above right=of a] {$w_1$};
\node[world] (d) [below right=of a] {$w_2$};
\node[world] (e) [below right=of b] {$w_3$};

\path[-] (a) edge node[midway,above] {$q$} (b);
\path[-] (b) edge node[midway,above] {$p$} (e);
\path[-] (e) edge node[midway,below] {$q$} (d);
\path[-] (a) edge node[midway,below] {$p$} (d);

\end{tikzpicture}
\end{center}

By applying Corollary~\ref{cor:difference of probabilities} we immediately obtain that process $p$ decides the value 1 with at most $\epsilon$ in $w_1$. By the same reasoning, this is also true for process $q$ in $w_2$. By Lemma~\ref{lem:indist_preserves_prob}, indistinguishability preserves probabilities and we get that $p$ and $q$ need to decide the value~1 with the same probability as in $w_1$ and $w_2$ respectively. Therefore, the individual probabilities of deciding the value 1 are both bound by $\epsilon$. This is extremely helpful, because we found a $pq$-chain ($w_0,w_1,w_3$) along which $p$'s probability of deciding 1 increases by at most $\epsilon$.

Using epistemic reasoning, we can construct a $pq$-chain with $R$-many such segments. In particular, the chain will end in an execution in which $p$ decides 1 with probability 1. In the proof, we use the execution where every process has input 1 and $p$ receives all messages. Thus, $p$'s probability to decide 1 needs to go from 0 to 1 in $R$ many steps of size at most $\epsilon$, yielding $\epsilon \geq R^{-1}$.  Interestingly, this proof only considers two out of $N$ processes, and mostly relies on epistemic indistinguishability arguments. Details are provided in the Appendix.

\begin{theorem}\label{thm:lb}
Any decision function $\delta$ solving randomized coordinated attack induces an output model $\CM^O$ such that for all worlds $w$:
\[
\CM^O, w \models \SFP_{\geq\frac{1}{R}}\psi_{\mathsf{Dis}}.
\]
\end{theorem}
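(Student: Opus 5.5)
We argue in the operational model and transfer the result to $\CM^O$ via the correspondence of Section~\ref{sec:task solv}: $\CM^O,(I,G,o)\models \SFP_{\geq s}\psi_{\mathsf{Dis}}$ iff the disagreement probability $\prob_{I,G}[\text{disagree}]$ is at least $s$. The theorem then amounts to showing that any $\delta$ solving randomized coordinated attack has $\epsilon \geq R^{-1}$, where $\epsilon$ is the agreement bound. We read this as a bound on the worst case over adversarial choices, so the claim for every world $w$ means that $\epsilon$ cannot be below $1/R$. We fix two processes $p,q$, the all-ones input $\mathbf{1}$, and build an $R$-step chain of adversary choices $(\mathbf 1,G^0),(\mathbf 1,G^1),\ldots,(\mathbf 1,G^{R})$.

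\textbf{Endpoints.} The chain starts with $G^0$, the sequence in which no messages between distinct processes are delivered. It ends with $G^R$, in which all messages are delivered.

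In $G^0$, process $p$'s local state coincides with its state in the execution with input $I'$ where $I'(q)=0$ and all else is equal; indistinguishability holds for every $x$, since $p$ hears nothing. Validity~1 forces $p$ to decide $0$ in that execution, so Lemma~\ref{lem:indist_preserves_prob} gives $\prob_{\mathbf 1,G^0}[p \text{ decides }1]=0$. In $G^R$, Validity~2 gives probability $1$.

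\textbf{Intermediate steps.} For $k=1,\ldots,R$, let $G^k$ deliver every message in rounds $1,\ldots,k$ except the last round's message to one side. More precisely, we define $G^k$ by removing deliveries round by round, backwards from round $R$, alternating which of $p,q$ loses the final incoming message. This mimics the standard $pq$-chain of Lynch's proof, but we pair consecutive executions so that each unit of progress costs one $\epsilon$ rather than two.

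Each step uses the pattern of the four-world diagram: $\alpha \sim_q \beta$, followed by $\beta\sim_p\gamma$. By Lemma~\ref{lem:indist_preserves_prob}, $q$'s probability of deciding $1$ is the same in $\alpha$ and $\beta$. By Corollary~\ref{cor:difference of probabilities}, $p$'s probability in $\beta$ is within $\epsilon$ of $q$'s. Again by Lemma~\ref{lem:indist_preserves_prob}, $p$'s probability is preserved between $\beta$ and $\gamma$. The \emph{crucial saving} is that we only apply Corollary~\ref{cor:difference of probabilities} once per round. This works because $p$'s value carries across the $\sim_p$ link for free, while the $\sim_q$ link is used to switch processes only once. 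Concretely, we want $\prob_{\mathbf 1,G^{k}}[p \text{ decides }1]\le \prob_{\mathbf 1,G^{k-1}}[p\text{ decides }1]+\epsilon$, with $G^{k-1}\sim_q G'^{k}\sim_p G^{k}$. Here $G'^{k}$ is the execution obtained from $G^k$ by dropping only the last message sent to $q$ in round $k$, which $q$ cannot detect before round $k+1$. The indistinguishability relations between these worlds are verified with Lemma~\ref{lem:correspnence} by comparing local states $s_p,s_q$ at round $R$, with the $x$ component held fixed.

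\textbf{Conclusion.} Telescoping gives $1\le 0+R\epsilon$, hence $\epsilon\ge 1/R$. Transferring back through the displayed equivalence between $\CM^O$ and $\prob_{I,G}$ yields the statement.

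\textbf{Main obstacle.} The hard part is designing $G^k$ so that each round contributes only a single $\epsilon$, rather than the two $\epsilon$'s that produced the old $(R+1)^{-1}$ bound. The chain also has to reach probability $0$ at the start through an input change rather than an extra step. My first attempt would use executions in which round $k$ delivers to $p$ but not to $q$, and check that the two local-state equalities hold. If this fails, I would fall back to levels: Theorem~\ref{thm:epistemic levels} bounds $p$'s decision probability by $\ell\epsilon$ when $p$ has level $\ell$, and $p$ has level at most $R$.
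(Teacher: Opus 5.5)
Your worst-case reading of the statement matches what the paper's proof actually establishes, and your endpoints are right. The two links you need also hold once $G^k$ is fixed as full delivery in rounds $1,\dots,k$ and nothing afterwards (the paper's $B_k$), and $G'^k$ as the same pattern except that $q$ receives nothing in round $k$. Then $G^{k-1}\sim_q G'^k\sim_p G^k$. Two corrections to your description: it is the sender $p$, not the receiver $q$, who never detects the loss, because nothing is delivered after round $k$; and for $n>2$ you must drop every round-$k$ message to $q$, not just one.

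The gap is in what you extract from this chain. Lemma~\ref{lem:indist_preserves_prob} conserves $q$'s probability across the $\sim_q$ link and $p$'s across the $\sim_p$ link. Your single use of Corollary~\ref{cor:difference of probabilities} at $G'^k$ therefore gives
\[
\prob_{\mathbf 1,G^k}[p\text{ decides }1]=\prob_{\mathbf 1,G'^k}[p\text{ decides }1]\le\prob_{\mathbf 1,G'^k}[q\text{ decides }1]+\epsilon=\prob_{\mathbf 1,G^{k-1}}[q\text{ decides }1]+\epsilon .
\]
So $p$ at step $k$ is controlled by $q$ at step $k-1$, not by $p$. Your inequality $\prob_{\mathbf 1,G^{k}}[p\text{ decides }1]\le\prob_{\mathbf 1,G^{k-1}}[p\text{ decides }1]+\epsilon$ would also require comparing $q$ with $p$ at $G^{k-1}$. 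That is a second application of the corollary in every round, and it only yields $\epsilon\ge 1/(2R-1)$. With the same orientation ($\sim_q$ then $\sim_p$) in every round the induction cannot close: after round $k$ you control $p$'s probability, but the first link of round $k+1$ conserves $q$'s. Your ``crucial saving'' is exactly the unjustified step.

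The missing idea is to control both processes at once. The paper uses both rows of its diagram. From $B_i\sim_q C_i\sim_p B_{i+1}$, $p$'s probability at $B_{i+1}$ is at most $q$'s at $B_i$ plus $\epsilon$. From $B_i\sim_p A_i\sim_q B_{i+1}$, $q$'s probability at $B_{i+1}$ is at most $p$'s at $B_i$ plus $\epsilon$. Starting from $0$ for both at $B_0$, the larger of the two probabilities at $B_i$ is therefore at most $i\epsilon$. Since both equal $1$ at $B_R$, you get $R\epsilon\ge 1$. Equivalently, you could keep a single chain if you reverse the orientation of the link pair every round and always track the process whose probability the last link conserved.

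Your fallback does not close the gap either. Theorem~\ref{thm:epistemic levels} identifies levels with iterated knowledge but says nothing about decision probabilities. The estimate $\prob[p\text{ decides }1]\le\ell\epsilon$ is a separate lemma that needs its own induction on levels, which you have not supplied.
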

\begin{proof}
    See Appendix.
\end{proof}

Our model also allows us to study randomized coordinated attack for more structured communication patterns.
The \emph{iterated immediate snapshot model} ($\mathcal{IIS}$) is the iterated version of the communication pattern presented in Example~\ref{ex:msg_upd}.
As it turns out, the proof techniques employed for Theorem~\ref{thm:lb} can also be applied here, and we obtain Theorem~\ref{thm:lb_iis}, which says that in the more restricted $\mathcal{IIS}$, a lower bound on $\epsilon$ is $c3^{-R}$ for some constant~$c$. The intuition is that the number of worlds in the updated model grows exponentially with a factor of $3$, and the updated model for two processes forms a $pq$-chain (see~Example~\ref{ex:msg_upd}).

Algorithm~\ref{alg:rand_coord_attack} is still applicable in $\mathcal{IIS}$, although it may no longer be optimal because the processes have access to additional information. In particular, they know that in each round at least one message is received by some process, which may reduce the number of rounds required for a decision or, by Theorem~\ref{thm:epistemic levels}, relax the required knowledge.

\begin{theorem}\label{thm:lb_iis}
Any decision function $\delta$ solving randomized coordinated attack in $\mathcal{IIS}$ induces an output model $\CM^O$ such that for all worlds $w$:
\[
\CM^O, w \models \SFP_{\geq c3^{-R}}\psi_{\mathsf{Dis}},
\]
where $c\in \mathbb{R}$ is a constant.
\end{theorem}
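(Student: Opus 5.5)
We adapt the chain argument of Theorem~\ref{thm:lb}, working with only two processes $p$ and $q$. For $n>2$, the remaining processes are either fixed to receive everything or treated as part of the adversary's choice; this does not affect the indistinguishability relations between $p$ and $q$. Fix the input configuration $\mathbf{1}$ in which all inputs are $1$. The plan is to build a single $pq$-chain of executions in the $\mathcal{IIS}$ execution model $\CI[\CA]$. The chain should start at an execution in which $p$ must decide $0$ with probability $1$, and end at the execution with full delivery, where $p$ must decide $1$ with probability $1$ by $\psi_{\mathsf{val}}$. We then count its length.

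\textbf{Step 1: the chain.} By Example~\ref{ex:msg_upd}, one round of $\mathcal{IS}$ turns a single world into the path $\top\bot \sim_p \top\top \sim_q \bot\top$ of length $3$. Iterating, we show by induction on $R$ that the deterministic part of the $R$-round frame (fixing $I=\mathbf{1}$ and ignoring coins) is a path of $3^R$ worlds. Its edges alternate in a way that yields a $pq$-chain. The induction step is a subdivision: each world becomes a 3-path, and consecutive blocks are glued through their endpoints. The glue is valid because, in the solo-view graphs ($\top\bot$ or $\bot\top$), the process that heard nothing new cannot distinguish the two glued worlds. This is the standard protocol-complex subdivision argument for $\mathcal{IIS}$.

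The chain needs a $0$-end. Plain $\mathcal{IIS}$ has no execution in which no messages are delivered, so we extend the chain at one end with an input change. In an endpoint world of the path, one process, say $q$, hears nothing in any round. Such a world is $p$-indistinguishable from the same world with $p$'s input replaced by $0$, and in that world both processes must decide $0$ by Validity. This adds one link, so the total length is at most $3^R+1$.

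\textbf{Step 2: propagating probabilities.} Along each edge $\alpha\sim_p\alpha'$, Lemma~\ref{lem:indist_preserves_prob} gives $\prob_{I,G}[p\text{ decides }1]=\prob_{I',G'}[p\text{ decides }1]$. The same equality holds for $q$-edges. Within a single world, Corollary~\ref{cor:difference of probabilities} (equivalently Lemma~\ref{lem:prob_difference}) bounds the gap between $p$'s and $q$'s decision probabilities by $\epsilon$.

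Each switch of the indistinguishable process along the chain therefore costs at most $\epsilon$. Walking the chain, the probability that $p$ decides $1$ rises from $0$ to $1$ across at most $3^R+1$ switches. This gives $1\le (3^R+1)\epsilon$, hence $\epsilon\ge 1/(3^R+1)\ge \tfrac12 3^{-R}$, so $c=1/2$ works. Finally, we translate this back to the statement $\CM^O,w\models\SFP_{\ge c3^{-R}}\psi_{\mathsf{Dis}}$ exactly as in the proof of Theorem~\ref{thm:lb}. Some world of the chain has disagreement probability at least $\epsilon$, and we read the formula at the worst-case adversarial class, matching how Theorem~\ref{thm:lb} is stated.

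\textbf{Main obstacle.} The hardest step is the combinatorial claim in Step 1. We must show that the iterated $\mathcal{IS}$ frame for two processes really is a single alternating path of length $3^R$, with the labels consistent at the junctions between blocks. We also must ensure that the coin components $x$ do not break the chain. For this we fix the coin realization $x$ along the whole chain, and use Lemma~\ref{lem:indist_preserves_prob}, which needs only the acting process's coins and local state to agree. We will first try the induction by writing each world's local views explicitly as nested sets of received messages.
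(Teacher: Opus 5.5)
Your proposal is correct and follows essentially the same route as the paper. Both build a $pq$-chain along the two-process $\mathcal{IIS}$ line of $3^R$ worlds, from the endpoint where one process is silent toward full delivery, applying Lemma~\ref{lem:indist_preserves_prob} along edges and Lemma~\ref{lem:prob_difference} at each change of process, which gives $\epsilon\geq c3^{-R}$.

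One small slip needs fixing. At the endpoint where $q$ hears nothing, the world with $p$'s input set to $0$ is $q$-indistinguishable from it, not $p$-indistinguishable. So it is $q$'s probability of deciding $1$ that is forced to $0$ there, and this costs at most one extra switch in your count.
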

\begin{proof}
    See Appendix.
\end{proof}

\section{Conclusion}\label{sec:conclusion}
In this work we developed a probabilistic epistemic framework for bounded time task solvability. Processes may flip coins to solve a task with probability of error as small as possible in a given number of rounds. We illustrated the use of the framework  focusing on the randomized coordinated attack problem.
Some of the benefits obtained are closing previous upper and lower bounds, as well as  formalizing nested knowledge, a notion that plays an important role in agreement tasks, such as consensus and approximate agreement. 

We emphasize that the presented framework is general and can be used to study randomized algorithms under arbitrary communication patterns, as well as problems other than coordinated attack.
We illustrated the generality 
with a lower bound  in an immediate snapshots model of main interest in shared memory, showing that the same type of argument can be used the coordinated attack setting, as well as in other communication models.

Two main research directions seem particularly interesting. First, to study (bounded) randomized solvability of other tasks. Our techniques apply directly to  
approximate agreement e.g.~\cite{FuggerNS21}, which we will include in the full version. It would be interesting to study other tasks, such as set agreement~\cite{Chaudhuri90}. It would be interesting to study also equality negation, which has been analyzed both topologically~\cite{GoubaultLLR19} and epistemically~\cite{DitmarschGLLR21}.

The second research direction is on the epistemic side, where many interesting problems remain open. In particular, about a sound and complete axiomatization, and the expressivity of the language.
This paper is primarily focused on the formal study of randomized algorithms, however, it would be interesting to consider  more general updates beyond messages and coin flips. In particular, extending the framework to probabilistic actions appears to be an important direction for future work.

\bibliography{ref-probabilistic.bib}

@inproceedings{GoubaultLLR19,
  author       = {{\'{E}}ric Goubault and
                  Marijana Lazic and
                  J{\'{e}}r{\'{e}}my Ledent and
                  Sergio Rajsbaum},
  editor       = {Jukka Suomela},
  title        = {Wait-Free Solvability of Equality Negation Tasks},
  booktitle    = {33rd International Symposium on Distributed Computing, {DISC} 2019,
                  Budapest, Hungary, October 14-18, 2019},
  series       = {LIPIcs},
  pages        = {21:1--21:16},
  publisher    = {Schloss Dagstuhl - Leibniz-Zentrum f{\"{u}}r Informatik},
  year         = {2019},
  url          = {https://doi.org/10.4230/LIPIcs.DISC.2019.21},
  doi          = {10.4230/LIPICS.DISC.2019.21},
  bibsource    = {dblp computer science bibliography, https://dblp.org}
}

@article{DitmarschGLLR21,
  author       = {Hans van Ditmarsch and
                  {\'{E}}ric Goubault and
                  Marijana Lazic and
                  J{\'{e}}r{\'{e}}my Ledent and
                  Sergio Rajsbaum},
  title        = {A dynamic epistemic logic analysis of equality negation and other
                  epistemic covering tasks},
  journal      = {J. Log. Algebraic Methods Program.},
  volume       = {121},
  pages        = {100662},
  year         = {2021},
  url          = {https://doi.org/10.1016/j.jlamp.2021.100662},
  doi          = {10.1016/J.JLAMP.2021.100662},
  bibsource    = {dblp computer science bibliography, https://dblp.org}
}

@inproceedings{Chaudhuri90,
  author       = {Soma Chaudhuri},
  editor       = {Cynthia Dwork},
  title        = {Agreement is Harder than Consensus: Set Consensus Problems in Totally
                  Asynchronous Systems},
  booktitle    = {Proceedings of the Ninth Annual {ACM} Symposium on Principles of Distributed
                  Computing, Quebec City, Quebec, Canada, August 22-24, 1990},
  pages        = {311--324},
  publisher    = {{ACM}},
  year         = {1990},
  url          = {https://doi.org/10.1145/93385.93431},
  doi          = {10.1145/93385.93431},
  bibsource    = {dblp computer science bibliography, https://dblp.org}
}

@book{AWbook,
	address = {Hoboken, NJ, USA},
	author = {Attiya, Hagit and Welch, Jennifer},
	isbn = {0471453242},
	publisher = {John Wiley \& Sons},
	title = {Distributed Computing: Fundamentals, Simulations and Advanced Topics},
	year = {2004}}

@article{WinklerPGSS24,
  author       = {Kyrill Winkler and
                  Ami Paz and
                  Hugo Rincon Galeana and
                  Stefan Schmid and
                  Ulrich Schmid},
  title        = {The Time Complexity of Consensus Under Oblivious Message Adversaries},
  journal      = {Algorithmica},
  volume       = {86},
  number       = {6},
  pages        = {1830--1861},
  year         = {2024},
  url          = {https://doi.org/10.1007/s00453-024-01209-4},
  doi          = {10.1007/S00453-024-01209-4},
  bibsource    = {dblp computer science bibliography, https://dblp.org}
}

@article{GoubaultLR21,
  author       = {{\'{E}}ric Goubault and
                  J{\'{e}}r{\'{e}}my Ledent and
                  Sergio Rajsbaum},
  title        = {A simplicial complex model for dynamic epistemic logic to study distributed
                  task computability},
  journal      = {Inf. Comput.},
  volume       = {278},
  pages        = {104597},
  year         = {2021},
  url          = {https://doi.org/10.1016/j.ic.2020.104597},
  doi          = {10.1016/J.IC.2020.104597},
  bibsource    = {dblp computer science bibliography, https://dblp.org}
}

@book{Halpern2003-HALRAU,
	author = {Joseph Y. Halpern},
	editor = {},
	publisher = {MIT Press},
	title = {Reasoning About Uncertainty},
	year = {2003}
}

@article{CastanedaDRV24,
  author       = {Armando Casta{\~{n}}eda and
                  Hans van Ditmarsch and
                  David A. Rosenblueth and
                  Diego A. Vel{\'{a}}zquez},
  title        = {Pattern Models: {A} Dynamic Epistemic Logic For Distributed Systems},
  journal      = {Comput. J.},
  volume       = {67},
  number       = {7},
  pages        = {2421--2440},
  year         = {2024},
  url          = {https://doi.org/10.1093/comjnl/bxae016},
  doi          = {10.1093/COMJNL/BXAE016},
  bibsource    = {dblp computer science bibliography, https://dblp.org}
}

@article{FuggerNS21,
  author       = {Matthias F{\"{u}}gger and
                  Thomas Nowak and
                  Manfred Schwarz},
  title        = {Tight Bounds for Asymptotic and Approximate Consensus},
  journal      = {J. {ACM}},
  volume       = {68},
  number       = {6},
  pages        = {46:1--46:35},
  year         = {2021},
  url          = {https://doi.org/10.1145/3485242},
  doi          = {10.1145/3485242},
  bibsource    = {dblp computer science bibliography, https://dblp.org}
}

@inproceedings{Akkoyunlu75,
author = {Akkoyunlu, E. A. and Ekanadham, K. and Huber, R. V.},
title = {Some constraints and tradeoffs in the design of network communications},
year = {1975},
isbn = {9781450378635},
publisher = {Association for Computing Machinery},
address = {New York, NY, USA},
url = {https://doi.org/10.1145/800213.806523},
doi = {10.1145/800213.806523},
booktitle = {Proceedings of the Fifth ACM Symposium on Operating Systems Principles},
pages = {67--74},
numpages = {8},
location = {Austin, Texas, USA},
series = {SOSP '75}
}

@inproceedings{gray78,
author = {Gray, Jim},
title = {Notes on Data Base Operating Systems},
year = {1978},
isbn = {3540087559},
publisher = {Springer-Verlag},
address = {Berlin, Heidelberg},
booktitle = {Operating Systems, An Advanced Course},
pages = {393--481},
numpages = {89}
}

@article{VargheseL96,
  author       = {George Varghese and
                  Nancy A. Lynch},
  title        = {A Tradeoff Between Safety and Liveness for Randomized Coordinated
                  Attack},
  journal      = {Inf. Comput.},
  volume       = {128},
  number       = {1},
  pages        = {57--71},
  year         = {1996},
  url          = {https://doi.org/10.1006/inco.1996.0063},
  doi          = {10.1006/INCO.1996.0063},
  bibsource    = {dblp computer science bibliography, https://dblp.org}
}

@book{Lynch96,
  author       = {Nancy A. Lynch},
  title        = {Distributed Algorithms},
  publisher    = {Morgan Kaufmann},
  year         = {1996},
  isbn         = {1-55860-348-4},
  bibsource    = {dblp computer science bibliography, https://dblp.org}
}

@misc{aspnes2025notestheorydistributedsystems,
      title={Notes on Theory of Distributed Systems}, 
      author={James Aspnes},
      year={2025},
      eprint={2001.04235},
      howpublished={arXiv 2001.04235},
      primaryClass={cs.DC},
      url={https://arxiv.org/abs/2001.04235}, 
}

@article{ProbabDEL-Kooi-2003,
	author = {Kooi, Barteld P. },
	date = {2003/09/01},
	doi = {10.1023/A:1025050800836},
	id = {Kooi2003},
	isbn = {1572-9583},
	journal = {Journal of Logic, Language and Information},
	number = {4},
	pages = {381--408},
	title = {Probabilistic Dynamic Epistemic Logic},
	url = {https://doi.org/10.1023/A:1025050800836},
	volume = {12},
	year = {2003}
}

@article{vanBenthemGerbrandyKooi2009,
	author = {van Benthem, Johan and Gerbrandy, Jelle and Kooi, Barteld},
	date = {2009/10/01},
	doi = {10.1007/s11225-009-9209-y},
	id = {van Benthem2009},
	isbn = {1572-8730},
	journal = {Studia Logica},
	number = {1},
	pages = {67--96},
	title = {Dynamic Update with Probabilities},
	url = {https://doi.org/10.1007/s11225-009-9209-y},
	volume = {93},
	year = {2009}
}

@book{FHMV1995,
  author       = {Ronald Fagin and
                  Joseph Y. Halpern and
                  Yoram Moses and
                  Moshe Y. Vardi},
  title        = {Reasoning About Knowledge},
  publisher    = {{MIT} Press},
  year         = {1995},
  url          = {https://doi.org/10.7551/mitpress/5803.001.0001},
  doi          = {10.7551/MITPRESS/5803.001.0001},
  isbn         = {9780262562003},
  bibsource    = {dblp computer science bibliography, https://dblp.org}
}

@article{Lamport78,
  author       = {Leslie Lamport},
  title        = {Time, Clocks, and the Ordering of Events in a Distributed System},
  journal      = {Commun. {ACM}},
  volume       = {21},
  number       = {7},
  pages        = {558--565},
  year         = {1978},
  url          = {https://doi.org/10.1145/359545.359563},
  doi          = {10.1145/359545.359563},
  bibsource    = {dblp computer science bibliography, https://dblp.org}
}

@book{Hintikka1962-HINKAB-4,
	address = {Ithaca, NY, USA},
	author = {Kaarlo Jaakko Juhani Hintikka},
	editor = {},
	publisher = {Cornell University Press},
	title = {Knowledge and Belief: An Introduction to the Logic of the Two Notions},
	year = {1962}
}

@inproceedings{DBLP:conf/stoc/HerlihyS93,
  author       = {Maurice Herlihy and
                  Nir Shavit},
  editor       = {S. Rao Kosaraju and
                  David S. Johnson and
                  Alok Aggarwal},
  title        = {The asynchronous computability theorem for t-resilient tasks},
  booktitle    = {Proceedings of the Twenty-Fifth Annual {ACM} Symposium on Theory of
                  Computing, May 16-18, 1993, San Diego, CA, {USA}},
  pages        = {111--120},
  publisher    = {{ACM}},
  year         = {1993},
  url          = {https://doi.org/10.1145/167088.167125},
  doi          = {10.1145/167088.167125},
  bibsource    = {dblp computer science bibliography, https://dblp.org}
}

@article{DBLP:journals/ipl/AlpernS85,
  author       = {Bowen Alpern and
                  Fred B. Schneider},
  title        = {Defining Liveness},
  journal      = {Inf. Process. Lett.},
  volume       = {21},
  number       = {4},
  pages        = {181--185},
  year         = {1985},
  url          = {https://doi.org/10.1016/0020-0190(85)90056-0},
  doi          = {10.1016/0020-0190(85)90056-0},
  bibsource    = {dblp computer science bibliography, https://dblp.org}
}

@article{DBLP:journals/jacm/FaginHV92,
  author       = {Ronald Fagin and
                  Joseph Y. Halpern and
                  Moshe Y. Vardi},
  title        = {What Can Machines Know? On the Properties of Knowledge in Distributed
                  Systems},
  journal      = {J. {ACM}},
  volume       = {39},
  number       = {2},
  pages        = {328--376},
  year         = {1992},
  url          = {https://doi.org/10.1145/128749.150945},
  doi          = {10.1145/128749.150945},
  bibsource    = {dblp computer science bibliography, https://dblp.org}
}

@book{vanDitmarsch2007-VANDEL-6,
	address = {Dordrecht, Netherland},
	author = {Hans van Ditmarsch and Wiebe van der Hoek and Barteld Kooi},
	editor = {},
	publisher = {Springer},
	title = {Dynamic Epistemic Logic},
	year = {2007}
}

@inproceedings{DBLP:conf/wdag/FraigniaudPR25,
  author       = {Pierre Fraigniaud and
                  Boaz Patt{-}Shamir and
                  Sergio Rajsbaum},
  editor       = {Dariusz R. Kowalski},
  title        = {Coordination Through Stochastic Channels},
  booktitle    = {39th International Symposium on Distributed Computing, {DISC} 2025,
                  Berlin, Germany, October 27-31, 2025},
  series       = {LIPIcs},
  pages        = {32:1--32:19},
  publisher    = {Schloss Dagstuhl - Leibniz-Zentrum f{\"{u}}r Informatik},
  year         = {2025},
  url          = {https://doi.org/10.4230/LIPIcs.DISC.2025.32},
  doi          = {10.4230/LIPICS.DISC.2025.32},
  bibsource    = {dblp computer science bibliography, https://dblp.org}
}

@inproceedings{DBLP:conf/stacs/GoubaultLR22,
  author       = {{\'{E}}ric Goubault and
                  J{\'{e}}r{\'{e}}my Ledent and
                  Sergio Rajsbaum},
  editor       = {Petra Berenbrink and
                  Benjamin Monmege},
  title        = {A Simplicial Model for KB4{\_}n: Epistemic Logic with Agents That
                  May Die},
  booktitle    = {39th International Symposium on Theoretical Aspects of Computer Science,
                  {STACS} 2022, Marseille, France (Virtual Conference), March 15-18,
                  2022},
  series       = {LIPIcs},
  pages        = {33:1--33:20},
  publisher    = {Schloss Dagstuhl - Leibniz-Zentrum f{\"{u}}r Informatik},
  year         = {2022},
  url          = {https://doi.org/10.4230/LIPIcs.STACS.2022.33},
  doi          = {10.4230/LIPICS.STACS.2022.33},
  bibsource    = {dblp computer science bibliography, https://dblp.org}
}

@article{DBLP:journals/corr/abs-2002-08863,
  author       = {Hans van Ditmarsch and
                  Eric Goubault and
                  J{\'{e}}r{\'{e}}my Ledent and
                  Sergio Rajsbaum},
  title        = {Knowledge and simplicial complexes},
  journal      = {CoRR},
  volume       = {abs/2002.08863},
  year         = {2020},
  url          = {https://arxiv.org/abs/2002.08863},
  eprinttype   = {arXiv},
  eprint       = {2002.08863},
  bibsource    = {dblp computer science bibliography, https://dblp.org}
}

@article{DBLP:journals/jacm/FaginH94,
  author       = {Ronald Fagin and
                  Joseph Y. Halpern},
  title        = {Reasoning About Knowledge and Probability},
  journal      = {J. {ACM}},
  volume       = {41},
  number       = {2},
  pages        = {340--367},
  year         = {1994},
  url          = {https://doi.org/10.1145/174652.174658},
  doi          = {10.1145/174652.174658},
  bibsource    = {dblp computer science bibliography, https://dblp.org}
}

@article{DBLP:journals/jacm/FischerLP85,
  author       = {Michael J. Fischer and
                  Nancy A. Lynch and
                  Mike Paterson},
  title        = {Impossibility of Distributed Consensus with One Faulty Process},
  journal      = {J. {ACM}},
  volume       = {32},
  number       = {2},
  pages        = {374--382},
  year         = {1985},
  url          = {https://doi.org/10.1145/3149.214121},
  doi          = {10.1145/3149.214121},
  bibsource    = {dblp computer science bibliography, https://dblp.org}
}

\appendix

\section{The system \texorpdfstring{$\mathsf{S5}$}{S5}}
The system $\mathsf{S5}$ consists of the axioms
\begin{gather}
\text{all propositional tautologies}\tag{Taut}\label{ax:taut}\\
K_p(\phi \rightarrow \psi) \rightarrow(K_p\phi \rightarrow K_p\psi)\tag{K}\label{ax:K}\\
K_p\phi \rightarrow \phi \tag{T}\label{ax:T}\\
K_p \phi \rightarrow K_pK_p \phi \tag{4}\label{ax:4}\\
\langle K_p\rangle \phi \rightarrow K_p\langle K_p\rangle  \phi \tag{5}\label{ax:5}
\end{gather} 
 as well as the inference rules modus ponens MP and $K_p$-necessitation $K_p$-Nec:
 
\[
\begin{prooftree}
  \hypo{A}
  \hypo{A \to B}
  \infer2{B}
\end{prooftree}
\qquad(\text{MP})\qquad
\begin{prooftree}
  \hypo{A}
  \infer1{K_p A}
\end{prooftree}
\qquad(K_p\text{-Nec})
\]

\section{Information levels}
We repeat the definition of the information level from~\cite{Lynch96}.
At the end of round $r$ in execution $\alpha$, process $p$ is in state $s_p(\alpha,r)$ and first computes its ``local'' happens-before relation~\cite{Lamport78} after receiving the messages of that round. Intuitively, the relation describes which information the process has at the end of round $r$.

\begin{definition}[Local happens-before relation]\label{def:leq_p}
Let $\alpha = (I,G,x)$ be an execution. At the end of round $r$, process $p$ is in state
\[
(i_p, \bar{x}_{p\mid r}, \rec_p(G,r)),
\]
and computes the \emph{local happens-before relation}~$\leq_p$ on the domain $N\times [r]$. We define~$\leq_p$ as the smallest relation satisfying:
    \begin{enumerate}
        \item \textbf{Local progression:} if $r<r'$, then $(p,r) \leq_p (p,r')$;
        \item \textbf{Message causality:} if $p$ receives $m_{q,r}$, then $(q,r-1) \leq_p (p,r)$;
        \item \textbf{Transitivity:} if $(q,r)\leq_p (s,r')$ and $(s,r')\leq_p (l,r'')$, then $(q,r)\leq_p(l,r'')$.
    \end{enumerate}
\end{definition}

\begin{remark}
    In contrast to the Definition of $\leq_p$ in~\cite{Lynch96}, we have shifted round indices by 1.
\end{remark}

After computing the local happens-before relation, process $p$ performs the \emph{level algorithm} (Definition~\ref{def:level:alg}). Intuitively, the level algorithm captures a process’s knowledge about what other processes know, and iteratively, what they know about others’ knowledge. Notice that in a full-information protocol, process $p$ can recompute process $q$'s level at the end of round $r-1$ locally if it received a message from process $q$ at the end of round $r$.

\begin{definition}[Level algorithm~\cite{Lynch96}]\label{def:level:alg}
Using the happens-before relation~$\leq_p$ computed in the current round $r$ (Definition~\ref{def:leq_p}), process $p$ locally computes its \emph{level} $\ell(p,\alpha, r)$:
\begin{enumerate}
     \item \textbf{Initialization:} the level of every process is initialized to 0;
    \item \textbf{Base case:} if there is some $q\neq p$ such that $(q,1)\not\leq_p (p,r)$, then $\ell(p,\alpha, r) = 0$;
    \item \textbf{Induction:} if 
    $(q,1)\leq_p (p,r)$ for all $q\neq p$, then define for each $q$:
    \[
    f_q := \max\{ \ell(q,\alpha, r') \mid (q,r')\leq_p (p,r)\},
    \]
    which is non-empty due to $(q,1)\leq_p (p,r)$ for all $q\neq p$. Next, set: 
    \[
    \ell(p,\alpha, r) = \min\{ f_q \mid q \neq p\} +1.
    \]
\end{enumerate}
\end{definition}

\begin{remark}
    Since $G$ can be arbitrary, the level algorithm applies to any full-information protocol with perfect recall, regardless of the underlying network topology.
\end{remark}

Intuitively, if a process has level $n$, it knows that all other processes have at least level $n-1$, and that there exists at least one process with exactly level $n-1$. In particular, if a process has level $1$, it knows the inputs of all other processes.

\section{Proofs for Section~\ref{sec:information levels intro}}
This section presents the proofs of Theorem~\ref{thm:epistemic levels}, Lemma~\ref{lem:prob_difference}, and Theorem~\ref{thm:lb}.

\subsection*{Theorem~\ref{thm:epistemic levels}}
\begin{proof}

    We show both directions separately via induction on $n$, starting from left to right. That is, we show that
    \[
    \ell(p,\alpha, R) = n \text{ implies } \CI[\CA],\alpha \models K_pE^{n-1}\phi_\alpha \land \neg K_pE^n \phi_\alpha.
    \]
    In words, if $p$ has level $n$ at the end of execution $\alpha$, then its nested knowledge of the inputs (represented by $\phi_\alpha$) is $n-1$ deep.
    
    In what follows we assume that $\alpha := (I,G,x).$\\
    
    \noindent \textbf{Base case}.
    Let $\ell(p, \alpha,R) = 1$, we need to show that 
    \[
    \CI[\CA], (I,G,x) \models K_p \phi_\alpha\land \neg K_pE\phi_\alpha.
    \]
    We show both sides of the conjunction separately, starting with $K_p\phi_\alpha$.
    
    By assumption, process $p$ has received information that every other process's level is 0. By the definition of the operational model in Section~\ref{subsec:operational model}, process $p$'s local state $s(p,\alpha,R)$ contains all inputs. Since there is a one-to-one correspondence between the operational model and the worlds of the execution model $\CI[\CA]$ (see Lemma~\ref{lem:correspnence}), the valuation $V^\CA$ assigns the same propositional input variables to each world $(I',G',x')$ that is indistinguishable from, i.e., $(I,G,x)\sim^\CA_p (I',G',x')$. Therefore, it holds that $\CI[\CA], (I',G',x') \models \phi_\alpha$, and by the definition of truth (see Definition~\ref{def:relational_truth}), we obtain $\CI[\CA],(I,G,x) \models K_p \phi_\alpha$. 
    
    The argument, that this is process $p$'s maximal nested knowledge about the inputs, i.e. $\CI[\CA],\alpha \models \neg K_pE\phi_\alpha$ relies on the same arguments. 
    If process $p$ has exactly level 1, then there exits a process, say $q$, for which $p$ considers it possible that it only has level 0. In the execution model $\CI[\CA]$, this corresponds to $p$ not being able to distinguish $\alpha$ from an execution $\alpha'$ with $\CI[\CA], \alpha' \models \neg K_q\phi_\alpha$. By Definition~\ref{def:relational_truth}, it holds that $\CI[\CA], (I,G,x)\models \neg K_pE\phi_\alpha$.\\
    

    \noindent\textbf{Inductive step}.
    Let  $\ell(p,\alpha,R) = n+1$. We need to show that
    \[
    \CI[\CA], (I,G,x) \models K_pE^{n} \phi_\alpha\land \neg K_pE^{n+1}\phi_\alpha.
    \]
    Again, we show both sides of the conjunction separately, starting with $K_pE^{n}\phi_\alpha$.
    
    By assumption process $p$ has received information that every other process's level is~$n$. By applying the induction hypothesis, we find that 
    \begin{equation}\label{eq:lem:3.3:1}
    \CI[\CA], (I,G,x) \models K_pE_{-p}E^{n-1}\phi_\alpha,
    \end{equation}
    where $E_{-p}$ is short for \emph{everybody except $p$}. A standard formula valid in $\mathsf{S5}$ for arbitrary $\psi\in \CL$ and $k\in \mathbb{N}$ is\footnote{This formula can be derived in the same way for $\psi\in \CL_\SFP$.}
    \[
    K_pK_qE^k\psi \rightarrow K_pE^k\psi.
    \]
    By setting $\psi = \phi_\alpha$,
    we obtain that $ \CI[\CA], (I,G,x) \models K_pE^{n-1}\phi_\alpha$. Indeed, since for the non-trivial cases of the problem it holds that $n>1$, and thus $N\setminus \{p\}\neq \emptyset$, we have that  
    \[
    E_{-p} \phi_\alpha \equiv \bigwedge_{a\in N\setminus\{p\}}K_q\phi_\alpha
    \]
    implies that there exists another process $q$ with $\CI[\CA], (I,G,x) \models K_pK_qE^{n-1}\phi_\alpha$. By axiom 4 (transitivity), we obtain that $ \CI[\CA], (I,G,x) \models K_pK_pE^{n-1}\phi_\alpha$. Combined with Equation~\eqref{eq:lem:3.3:1}, this implies 
    \begin{align*}
    \CI[\CA], (I,G,x) \models K_pK_pE^{n-1}\phi_\alpha  \land K_pE_{-p}E^{n-1}\phi_\alpha \\ &\iff \\\CI[\CA], (I,G,x) \models K_p(K_pE^{n-1}\phi_\alpha \land E_{-p}E^{n-1}\phi_\alpha)\\
    & \iff\\ \CI[\CA], (I,G,x) \models K_p(K_pE^{n-1}\phi_\alpha \land \bigwedge_{q\in N\setminus\{p\}}K_qE^{n-1}\phi_\alpha)\\
    &\iff \\\CI[\CA], (I,G,x) \models K_pE^{n}\phi_\alpha,
    \end{align*}
    as desired.\\ 
    
    The argument for showing that this is the maximal nested knowledge is the same as for the base case.\\
    
    We proceed by showing the claim from right to left. That is, we show that
    \[
    \CI[\CA],\alpha \models K_pE^{n-1}\phi_\alpha \land \neg K_pE^n \phi_\alpha  \text{ implies } \ell(p,\alpha, R) = n.
    \]

    \noindent\textbf{Base case}. Assume $\CI[\CA],(I,G,x) \models K_p \phi_\alpha \land \neg K_pE\phi_\alpha$, we need to show that $\ell(p,\alpha,R)=1$.
    By Definition~\ref{def:relational_truth}, it holds that $(I,G,x)\sim^\CA_p (I',G',x')$ implies $\CI[\CA],(I',G',x')  \models \phi_\alpha$. By the one-to-one correspondence in Lemma~\ref{lem:correspnence}, it must hold that that for all executions $\alpha'$ such that $s(p,\alpha',R) = s(p,\alpha, R)$, we have $I' = I$. However, this can only occur in the relational model if and only if process $p$ has heard that all other process have at least level 0. By similar reasoning as before, $\neg K_pE^n\phi_\alpha$ ensures that $p$'s level is exactly $1$.\\

    \noindent\textbf{Inductive step.}
    For the inductive step, we prove the contraposition, i.e.,
    \[
    \ell(p,\alpha, R) \neq 1 \text{ implies }\CI[\CA],(I,G,x)\models \neg K_pE^{n-1}\phi_\alpha \lor K_pE^n \phi_\alpha.
    \]
    
    Assume that $\ell(p,\alpha,R)<n$. The other case follows immediately because $E^{k+1}\psi \rightarrow E^k\psi$ is a standard consequence of axiom T ($K_p\psi \rightarrow \psi$). In particular, we will show that $\CI[\CA],(I,G,x) \models \neg K_pE^{n-1}\phi_\alpha$. Since $\ell(p,\alpha,R) < n$ , there exists a process $q$ from which $p$ has heard of at most level $n-2$. Let $t$ be the round in which $p$ has heard of $q$ last, i.e., $t$ is the maximal round for which $(q,t)\leq_p(p,R)$. Since there are finitely many rounds, the round $t$ exists.
    Let $G'$
    be the sequence of communication graphs in which all messages to $q$ after $t$ are dropped. Let $\alpha' = (I,G',x)$. By construction, it holds that $\mathsf{in}_p(G) = \mathsf{in}_p(G')$, and thus $(I,G,x) \sim_p (I,G',x)$. Moreover, since $\ell(q,\alpha',R) < n-1$ is true by assumption, the inductive hypothesis applies, by which it must hold that $\CI[\CA], (I,G',x) \models \neg K_qE^{n-2}\phi$, from which we conclude that $p$ must consider it possible in $(I,G,x)$ that $q$'s level is $n-2$, i.e., $\CI[\CA], (I,G,x) \models \neg K_pE^{n-1}\phi_\alpha$.\qedhere
    
\end{proof}

\subsection*{Lemma~\ref{lem:prob_difference}}
\begin{proof}

We define the events
\begin{itemize}
    \item $A := \{(I,G,\cdot) \in \Omega \mid \delta(s_i((I,G,\cdot), R)) = 1)\}$;
    \item $B := \{(I,G,\cdot)\in \Omega \mid\delta(s_j((I,G,\cdot), R)) = 1)\}$;
    \item $D := \{(I,G,\cdot) \in \Omega \mid \delta(s_i((I,G,\cdot), R) ) \neq \delta(s_j((I,G,\cdot), R))\}$.
\end{itemize}
Then
\begin{itemize}
    \item $\mathsf{Pr}_{I,G}[i \text{ decides }1 ] = \mathsf{Pr}_{I,G}[A]$;
    \item $\mathsf{Pr}_{I,G}[j \text{ decides }1] = \mathsf{Pr}_{I,G}[B]$;
    \item $\mathsf{Pr}_{I,G}[i \text{ and } j \text{ disagree }] = \mathsf{Pr}_{I,G}[D]$.
\end{itemize}
Observe that 
\[
A \Delta B \subseteq D,
\]
where $A \Delta B$ is the \emph{symmetric difference} $(A\setminus B) \cup (B \setminus A)$. A standard property of probability measures is:
\[
|\mathsf{Pr}_{I,G}[A]-\mathsf{Pr}_{I,G}[B]|\leq  \mathsf{Pr}_{I,G}[A\Delta B].
\]
By monotonicity ($A\Delta B \subseteq D$) and our assumption, we obtain that $\mathsf{Pr}_{I,G}[A\Delta B] \leq \prob_{I,G}[D]$ and $\prob_{I,G}[D] \leq \epsilon $, which yields the claim.


\end{proof}

\subsection*{Theorem~\ref{thm:lb}}
\begin{proof}
    We construct an explicit $pq$-chain. The worlds of our interest are associated to three different types of strategies (sequences of communication graphs) that the adversary can choose:
\begin{enumerate}
    \item $B_i$ with $0\leq i \leq R$ represents the strategy in which all messages up to round $i$ are received by everyone, and all later messages are dropped. 
    \item $A_i$ with $0\leq i \leq R-1$  represents the strategy in which only process $p$ does not receive any messages from round $i$ on. Before round $i$ process $p$ received all messages. 
    \item $C_i$ with $0\leq i \leq R-1$ represents the strategy  in which only process $q$ does not receive any messages from round $i$ on. Before round $i$ process $q$ received all messages. 
\end{enumerate} 
Since inputs are fixed to be 1 throughout our construction, we identify worlds of $\CI[\CA]$ by those sets only. Consider now the below excerpt of the execution model $\CI[\CA]$. Reflexive and transitive arrows are omitted for simplicity. It holds that 
\begin{equation}\label{eq:lb:1}
    B_i \sim^\CA_p A_i \text{ and } B_i \sim^\CA_q C_i,\;  \text{for all $0\leq i \leq R$},
\end{equation}
as well as
\begin{equation}\label{eq:lb:2}
    B_j \sim^\CA_q A_{j-1} \text{ and } B_j \sim^\CA_p C_{j-1},\; \text{ for all $1\leq j \leq R$}.
\end{equation}

 \newcommand{\ara}{\ar@{~}[ur]^p}
 \newcommand{\arb}{\ar@{~}[dr]_q}
 \scalebox{0.65}
 {
\xymatrix{
&A_0\arb&&A_1\arb&&\;\;\;\;\;\;\dots&
A_{t-1}
\arb&&A_t\arb&\;\;\;\;\;\;\dots&&
A_{R-1}\arb
\\
B_0\ara\arb&&B_1\ara\arb&&B_2\ara\arb&\;\;\;\;\;\;\dots&
&B_t\ara\arb&&\;\;\;\;\;\;\dots&\ara\arb&&
B_R
\\
&C_0\ara&&C_1\ara&&\;\;\;\;\;\;\;\;\dots&
C_{t-1}\ara& 
&C_t\ara&\;\;\;\;\;\;\;\;\dots&
& C_{R-1}\ara
}
}

A $pq$-chain from $A_0$, a global state in which process $p$ decides the value $1$ with probability $0$, to $A_{R-1}$, a global state in which process $p$ decides the value $1$ with probability $1$, is given by:
\[
A_0 \sim^\CA_qB_1 \sim^\CA_p A_1 \sim^\CA_q B_2 \sim^\CA_p \cdots \sim^\CA_pA_{R-1}.
\]
Since $p$ ($q$) decides 1 in $A_0$ ($B_0$) with probability 0, the probability of both individually deciding 1 is at most $\epsilon$ in $B_1$. This naturally generalizes to $A_i,C_i$, and $B_{i+1}$. That is, for each segment $A_i \sim^\CA_q B_{i+1}\sim^\CA_p A_{i+1}$, the probability of $p$ deciding $1$ is at most $(i+1)\epsilon$. Thus, when in $A_{R-1}$, the probability is at most $R\epsilon$. By the task specification, the probability of $p$ deciding 1 in $A_{R-1}$ is 1, and therefore, it must hold that $R\epsilon \geq 1$. Rearranging the terms yields $\epsilon \geq R^{-1}$. Therefore $\CM^O,w\models\SFP_{\geq R^{-1}}\psi_{\mathsf{Dis}}$ for all worlds of the output model.
\end{proof}

\subsection*{Theorem~\ref{thm:lb_iis}}
\begin{proof}
    Throughout this proof, all processes have input 1.
    We employ the same technique as in the proof of Theorem~\ref{thm:lb}. Our goal is to find a $pq$-chain from a world in which a process $p$ decides the value 1 with probability 0 to another world in which it decides 1 with probability~1.  The probability of a process $p$ deciding 1 along the chain can be bounded by using similar arguments as in Theorem~\ref{thm:lb}, which yields a lower bound $\epsilon \geq \frac{1}{L}$, where $L$ is the length of the $pq$-chain of interest.

    It is a standard result that the worlds in $\mathcal{IIS}$ increase exponentially by a factor of 3 in each round~\cite{CastanedaDRV24}. Hence, when fixing two processes $p$ and $q$, the updated model for round $R$ naturally contains a $pq$-chain with $3^R$ many worlds. By fixing the inputs of the other processes and assuming that they receive all messages, we can concentrate on this \emph{line} (see Example~\ref{ex:msg_upd}).

    The worlds at the endpoints of that line (see Figure~\ref{fig:ex:msg_upd}) are the worlds where process $p$, respectively $q$, received no messages throughout all rounds. The world in the middle $w^*$ (i.e. at position $\frac{n+1}{2}$) is the world where both processes receive every message in all rounds.

    As in the proof of Theorem~\ref{thm:lb}, we are interested in the first world in which a process decides 1 with probability 1. In Theorem~\ref{thm:lb}, that world was $A_{R-1}$. Here, it is the world where the process $p$ (or $q$) received all messages and process $q$ (or $p$) received all except the last message. Since the updated model for $R$ rounds is a line, the path from the leftmost world to that world is of length $\Theta(3^{R})$. Notice that it is difficult to determine the exact length, as it changes with the parity of $R$.

    Along each transition, the probability of process $p$ deciding 1 can increase by at most $2\epsilon$ by Lemma~\ref{lem:prob_difference}. Indeed, if we have the segment
    \[
    w_1 \sim^\CA_q w_2,
    \]
    then:
    \begin{enumerate}
        \item $p$'s probability of deciding 1 in $w_2$ can be at most $\epsilon$ larger than $q$'s probability in $w_1$; and 
        \item $p$'s probability of deciding 1 in $w_1$ can be at most $\epsilon$ less than $q$'s probability in $w_1$.
    \end{enumerate}
     Thus, $p$'s probability can increase by at most $2\epsilon$ along $q$ transitions, and stays the same along $p$-transitions by Lemma~\ref{lem:indist_preserves_prob}. Because there are also $\Theta(3^{R})$ many $q$-transitions,
    the lower bound is therefore given, for some  constant  $c\in \mathbb{R}$:
    \[
    \epsilon \geq c3^{-R},
    \]
    by the same reasoning as in Theorem~\ref{thm:lb}.

\end{proof}

\end{document}